\newtheorem{theorem}{Theorem}[section]
\newtheorem{lemma}[theorem]{Lemma}
\newtheorem{proposition}[theorem]{Proposition}
\newtheorem{corollary}[theorem]{Corollary}
\newenvironment{proof}[1][Proof]{\begin{trivlist}
\item[\hskip \labelsep {\bfseries #1}]}{\end{trivlist}}
\newenvironment{remark}[1][Remark]{\begin{trivlist}
\item[\hskip \labelsep {\bfseries #1}]}{\end{trivlist}}
\newcommand{\Rmnum}[1]{\expandafter\@slowromancap\romannumeral #1@}
\begin{document}

%
\title{Euclidean  and Hermitian LCD MDS codes }

\author{ Claude Carlet$^1$ \and Sihem Mesnager$^2$ \and Chunming Tang$^3$ \and  Yanfeng Qi$^4$
\thanks{This work was supported by SECODE project and
the National Natural Science Foundation of China
(Grant No. 11401480, 11531002). C. Tang
also acknowledges support from 14E013 and
CXTD2014-4 of China West Normal University.
Y. Qi also acknowledges support from Zhejiang provincial Natural Science Foundation of China (LQ17A010008).
}

\thanks{C. Carlet is with Department of Mathematics, Universities of Paris VIII and XIII, LAGA, UMR 7539, CNRS, Sorbonne Paris Cit\'{e}. e-mail: claude.carlet@univ-paris8.fr}

\thanks{S. Mesnager is with Department of Mathematics, Universities of Paris VIII and XIII and Telecom ParisTech, LAGA, UMR 7539, CNRS, Sorbonne Paris Cit\'{e}. e-mail: smesnager@univ-paris8.fr}
\thanks{C. Tang is with School of Mathematics and Information, China West Normal University, Nanchong, Sichuan,  637002, China. e-mail: tangchunmingmath@163.com
}

\thanks{Y. Qi is with School of Science, Hangzhou Dianzi University, Hangzhou, Zhejiang, 310018, China.
e-mail: qiyanfeng07@163.com
}

}

%


\maketitle

\begin{abstract}
Linear codes with complementary duals (abbreviated LCD) are linear codes whose intersection with their dual is trivial.
When they are binary, they play an important role in armoring implementations against side-channel attacks and fault injection attacks. Non-binary LCD codes in characteristic 2 can be transformed into binary LCD codes by expansion. On the other hand, being optimal codes, maximum distance separable codes
(abbreviated MDS) are of much interest from many viewpoints due to their theoretical and practical properties. However,  little work has been done on LCD MDS codes.
In particular, determining the existence of $q$-ary $[n,k]$ LCD MDS  codes for various lengths $n$ and dimensions $k$ is a basic and interesting problem. In this paper, we  firstly study the
problem of the existence of $q$-ary $[n,k]$ LCD MDS  codes and completely solve it for the
Euclidean case. More specifically, we show that for $q>3$ there exists a $q$-ary $[n,k]$ Euclidean LCD MDS code, where $0\le k \le n\le q+1$, or, $q=2^{m}$, $n=q+2$ and $k= 3 \text{ or } q-1$. Secondly, we investigate several constructions of new Euclidean and Hermitian LCD MDS  codes. Our main techniques in
 constructing  Euclidean and Hermitian LCD MDS codes use some linear codes with small dimension or codimension, self-orthogonal codes and generalized Reed-Solomon codes.

\end{abstract}

\begin{IEEEkeywords}
 Linear codes, MDS codes, Linear complementary dual, Self-dual code, Self-orthogonal code.
\end{IEEEkeywords}

%
\IEEEpeerreviewmaketitle

\section{Introduction}

A linear  complementary dual  code (abbreviated LCD) is defined as a linear code $\mathcal C$ whose dual code $\mathcal C ^ \perp$ satisfies $\mathcal C \cap \mathcal C^ \perp=\{0\}$.
LCD codes have been widely applied in data storage, communications systems, consumer electronics, and cryptography. In \cite{mas92}, Massey showed that LCD codes  provide an optimum linear coding solution for the two-user binary adder channel. Recently, the first author and Guilley  \cite{CG14} investigated an interesting application of binary LCD codes against side-channel attacks (SCA) and fault injection attacks (FIA) and presented several constructions of LCD codes. They showed in particular that non-binary LCD codes in characteristic 2 can be transformed into binary LCD codes by expansion.  It is then important to keep in mind that, for SCA, the most interesting case is when $q$ is even.

LCD codes are also interesting objects in the general framework of algebraic coding. For asymptotical optimality and bounds of LCD codes, Massey \cite{mas92} showed that there exist asymptotically good LCD codes. Tzeng and Hartmann \cite{TH70} proved that the minimum distance of a class of LCD codes is greater than that given by the BCH bound. Sendrier \cite{sen04} showed that LCD codes meet the asymptotic Gilbert-Varshamov bound using properties of the hull dimension spectrum of linear codes. Dougherty et al. \cite{DKO15} gave a linear programming bound on the largest size of an LCD code of given length and minimum distance. Recently, Galvez et al. \cite{GKL17} studied the maximum minimum distance of LCD codes of fixed length and dimension.

A lot of works have been devoted to the characterization and constructions of LCD codes.  Yang and Massey  provided in \cite{YM94} a necessary and sufficient condition under which a cyclic code has a complementary dual. In \cite{GOS16}, quasi-cyclic codes that are LCD have been characterized and studied using their concatenated structures. Criteria for complementary duality of  generalized quasi-cyclic codes (GQC) bearing on the component codes are given and some explicit long GQC that are LCD, but not quasi-cyclic, have been exhibited in \cite{OOS17}.  In \cite{DNS17}, Dinh, Nguyend and Sriboonchitta  investigated the algebraic structure of $\lambda$-constacyclic codes over  finite commutative semi-simple rings. Among others, necessary and sufficient conditions for the existence of  LCD,   $\lambda$-constacyclic codes over such finite semi-simple rings have been provided. In \cite{DLL16}, Ding et al. constructed several families of LCD cyclic codes over finite fields and analyzed their parameters. In \cite{LDL16_1} Li et al. studied a class of LCD BCH codes proposed in \cite{LDL16_0} and extended the results on their parameters.
Mesnager et al. \cite{MTQ16} provided a construction of algebraic geometry LCD codes which could be good candidates to be resistant against SCA.  Liu and Liu constructed LCD matrix-product codes using quasi-orthogonal matrices in \cite{LL16}. It was also shown by Kandasamy et al. \cite{KSS12} that maximum rank distance codes generated by the trace-orthogonal-generator matrices are LCD codes. However, little is known on Hermitian LCD codes. More precisely,  it has been  proved in \cite{GOS16} that those codes are asymptotically good \cite{GOS16}. By employing their generator matrices, Boonniyoma and Jitman gave in \cite{BJ16} a sufficient and necessary condition on Hermitian codes for being LCD. Li \cite{Li17} constructed some cyclic Hermitian LCD codes over finite fields and analyzed their parameters.

MDS codes are of significantly practical and theoretical interest, since, for a fixed length and dimension, they have the largest error correcting and detecting capabilities and their weight distribution is known.
Thus, it is natural to consider the intersection of the classes of LCD codes and MDS codes. One of the central problem in this topic is to determine the existence of $q$-ary LCD MDS  codes for various lengths and dimensions. This problem was completely solved for the Euclidean case when $q$ is even \cite{Jin16}. Below, we summarize the known results on sufficient conditions for the existence of  $q$-ary Euclidean LCD MDS codes of  length $n$ and dimension $k$ when $q$ is odd.

 (i) $n = q + 1$ and $k$ even  with $4 \le k \le n-4$ \cite{Jin16};

(ii) $q$ is an odd square, $n\le \sqrt{q} + 1$ and $0\le k\le n$ \cite{Jin16};

(iii) $q\equiv 1 \mod 4$,
$4\cdot 16^n \cdot n^2<q$ and $0 \le k \le n$ \cite{Jin16};

(iv) $n \mid \frac{q-1}{2}$ or $n\mid \frac{q+1}{2}$ and $0\le k\le n$ \cite{Sar16};

(v)  $n$ even with $n | (q -1)$ and even $k$ with $0\le k \le n$ \cite{ZPS16}.

For Hermitian LCD MDS codes, there are only a few references studying them. In \cite{Sar16},
 a class of $q^2$-ary Hermitian LCD codes with length $q-1$ and dimension $\frac{q-1}{2}$ was given.

The main goal of this manuscript  is to study Euclidean and Hermitian LCD MDS codes. We completely determine all Euclidean LCD MDS for all known MDS parameters.
More precisely, we show that for $q>3$ we have a $q$-ary $[n,k]$ Euclidean MDS code, where $0\le k \le n\le q+1$, or, $q=2^{m}$, $n=q+2$ and $k= 3 \text{ or } q-1$.  Our main techniques are
 based on the construction of  Euclidean and Hermitian LCD MDS codes using some linear codes with small dimension or codimension,
self-orthogonal codes and generalized Reed-Solomon codes.  Some classes of new Euclidean and Hermitian LCD MDS  codes are presented from our constructions.

The paper is organized as follows.
Section \ref{Pre} gives preliminaries and background on MDS codes, Euclidean and Hermitian LCD codes.
In Section \ref{Euc}, we firstly provide a construction of Euclidean LCD code from linear codes. Next, we present some class of Euclidean LCD MDS codes coming from the class of
generalized Reed-Solomon codes. On the basis of these results,  we completely determine all Euclidean LCD MDS codes for all known MDS parameters. In  Section \ref{Her}, we present some class of  Hermitian LCD MDS codes.
\section{Preliminaries}\label{Pre}

Throughout this paper,  $p$ is a prime and $\mathbb F_q$  is the finite field of order $q$, where $q=p^m$ for some positive integer $m$. The set of non-zero elements of $\mathbb F_q$ is denoted by $\mathbb F^\times_q$. For any $x \in \mathbb F_{q^2}$, the conjugate of $x$ is defined as $\overline{x}= x^{q}$. For a matrix $A$,
$A^{T}$  denotes the transposed matrix of matrix $A$ 	and $\overline A$  denotes the conjugate matrix of $A$.  When $A$ is a square matrix, $\text{Spec}(A)$ denotes the set of all eigenvalues of
$A$.
An $[n, k, d]$ linear code $\mathcal C$ over $\mathbb F_q$ is a linear subspace of
$\mathbb F_q$ with dimension $k$ and minimum (Hamming) distance $d$. The value $n-k$ is called the \emph{codimension} of $\mathcal C$.
Given a linear code $\mathcal C$ of length $n$ over $\mathbb F_{q}$, its Euclidean dual code (resp. Hermitian dual code) is denoted by $\mathcal C^ {\perp}$ (resp. $\mathcal C^ {\perp_H}$). The codes $\mathcal C^ {\perp}$  and $\mathcal C^ {\perp_H}$ are defined by
\[\mathcal C^ {\perp}=\{ (b_0, b_1, \cdots, b_{n-1})\in \mathbb F_{q}^n:  \sum _{i=0}^{n-1} b_i  c_i=0, \forall (c_0, c_1, \cdots, c_{n-1}) \in  \mathcal C \}, \]

\[\mathcal C^ {\perp_H}=\{ (b_0, b_1, \cdots, b_{n-1})\in \mathbb F_{q^2}^n:  \sum _{i=0}^{n-1} b_i \overline c_i=0, \forall (c_0, c_1, \cdots, c_{n-1}) \in \mathcal C \}, \]
respectively.

 The minimum distance of an $[n, k, d]$ linear code is bounded by the Singleton bound
\[d\le n+1-k.\]
A  code meeting the above bound is called \emph{Maximum Distance Separable} (MDS).
The following proposition \cite{MS77} will be used in this paper.
\begin{proposition}\label{MDS}
Let $\mathcal C$ be an $[n,k,d]$ codes over $\mathbb F_q$. The following statements are equivalent:

(i) $\mathcal C$ is MDS;

(ii) $\mathcal C^{\perp}$ is MDS;

(iii) $\mathcal C^{\perp_H}$ is MDS;

(iv)  every $k$ columns of a generator matrix $G$ are linearly independent (i.e. any $k$ symbols of the codewords constitute a so-called information set, and can then be taken as message symbols).
\end{proposition}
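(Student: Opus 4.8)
The plan is to first establish the equivalence (i) $\Leftrightarrow$ (iv), then to derive (i) $\Leftrightarrow$ (ii) from it by a self-duality argument, and finally to obtain (i) $\Leftrightarrow$ (iii) by transporting the Euclidean statement through the conjugation automorphism of $\mathbb{F}_{q^2}$. For (i) $\Leftrightarrow$ (iv): since the Singleton bound gives $d \le n+1-k$ always, $\mathcal{C}$ is MDS precisely when it has no nonzero codeword of weight $\le n-k$. A nonzero codeword $c = xG$ with $x \in \mathbb{F}_q^k \setminus \{0\}$ has weight $\le n-k$ iff it vanishes on some set $S$ of $k$ coordinates, i.e. iff $x$ is orthogonal to the $k$ columns $\{g_i : i \in S\}$ of $G$; such a nonzero $x$ exists iff those $k$ columns fail to span $\mathbb{F}_q^k$, i.e. fail to be linearly independent. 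Hence $\mathcal{C}$ is MDS iff every $k$ columns of $G$ are linearly independent, which is (iv).

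For (i) $\Leftrightarrow$ (ii) I would prove a symmetric reformulation. After a coordinate permutation (which changes neither the MDS property nor, up to the same permutation, the dual code) we may take $G = [\,I_k \mid A\,]$ for some $k \times (n-k)$ matrix $A$, so that $H = [\,-A^T \mid I_{n-k}\,]$ is a generator matrix of $\mathcal{C}^{\perp}$. A routine bookkeeping with index sets shows that ``every $k$ columns of $G$ are linearly independent'' is equivalent to ``every square submatrix of $A$ is nonsingular'': a $k$-subset of columns of $[\,I_k\mid A\,]$ made of $k-j$ unit columns $\{e_i : i\in S_1\}$ and $j$ columns $\{a_\ell : \ell \in S_2\}$ of $A$ is independent iff the $j\times j$ submatrix of $A$ with rows $\{1,\dots,k\}\setminus S_1$ and columns $S_2$ is invertible, and every $j\times j$ submatrix with $0\le j\le \min(k,n-k)$ arises this way. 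Since a matrix has a singular $j\times j$ submatrix iff its transpose does, and negation does not affect singularity, the condition on $A$ coincides with the condition on $-A^T$; applying (iv) $\Leftrightarrow$ (i) to both $G$ and $H$ yields $\mathcal{C}$ MDS $\Leftrightarrow$ $\mathcal{C}^{\perp}$ MDS. (One could alternatively avoid standard form via the puncturing/shortening identity $\operatorname{rank}(G_S) + \dim\{c \in \mathcal{C}^{\perp} : \operatorname{supp}(c)\subseteq S\} = |S|$, reading off that $\operatorname{rank}(G_S)=k$ for all $|S|=k$ iff $\mathcal{C}^{\perp}$ has no nonzero word of weight $\le k$.)

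Finally, for (i) $\Leftrightarrow$ (iii): from the definitions, $b \in \mathcal{C}^{\perp_H}$ iff $\overline{b} \in \mathcal{C}^{\perp}$, so $\mathcal{C}^{\perp_H} = \overline{\mathcal{C}^{\perp}}$ and a generator matrix of $\mathcal{C}^{\perp_H}$ is $\overline{H}$. Since $x \mapsto x^{q}$ is a field automorphism of $\mathbb{F}_{q^2}$, applying it entrywise to a matrix preserves the rank of every submatrix; hence every $n-k$ columns of $\overline{H}$ are linearly independent iff the same holds for $H$, i.e. $\mathcal{C}^{\perp_H}$ is MDS iff $\mathcal{C}^{\perp}$ is MDS, and we conclude by (i) $\Leftrightarrow$ (ii). The only mildly delicate point in the whole argument is the index-set bookkeeping in the standard-form step; everything else is immediate, provided one is careful that all three dual notions yield full-rank matrices of the correct size so that ``MDS'' can be read off column-independence through (iv).
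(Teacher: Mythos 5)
Your proof is correct. Note that the paper itself offers no proof of this proposition: it is stated as a known result and attributed to MacWilliams and Sloane \cite{MS77}, so there is no in-paper argument to compare against. Your three steps --- the weight/column-rank dictionary for (i) $\Leftrightarrow$ (iv), the standard-form reduction showing that the MDS property of $[\,I_k \mid A\,]$ and of $[\,-A^T \mid I_{n-k}\,]$ both amount to every square submatrix of $A$ being nonsingular, and the observation that $\mathcal C^{\perp_H}=\overline{\mathcal C^{\perp}}$ with entrywise conjugation preserving all submatrix ranks --- are exactly the classical textbook argument, and the index-set bookkeeping you flag is handled correctly. The only cosmetic point worth making explicit is that item (iii) tacitly assumes the alphabet is $\mathbb F_{q^2}$ (as in the paper's definition of $\mathcal C^{\perp_H}$), which your last paragraph already presupposes.
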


A linear code $\mathcal C$ over $\mathbb F_{q}$ is called an \emph{LCD code} (or for short, LCD code) if $\mathcal C \cap \mathcal C^ {\perp}=\{0\}$.
A linear code $\mathcal C$ over $\mathbb F_{q^2}$ is called a \emph{Hermitian LCD code} (linear code with Hermitian complementary dual) if  $\mathcal C \cap \mathcal C^ {\perp_H}=\{0\}$. To distinguish between cassical LCD codes and Hermitian ones, we shall precise \emph{Euclidean LCD code} in the former case.
The following  proposition gives a complete characterization of
 Euclidean and Hermitian LCD codes (see. \cite{BJ16,CG14}).
\begin{proposition}\label{LDC}
 If $G$ is a generator matrix for the $[n, k]$ linear code $\mathcal C$, then $\mathcal C$ is an Euclidean (resp. a Hermitian) LCD code if and only if the $k \times k$ matrix $G G^{T}$
 (resp. $G\overline{G}^{T}$) is nonsingular.
\end{proposition}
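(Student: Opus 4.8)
The plan is to reduce the condition $\mathcal C \cap \mathcal C^{\perp}=\{0\}$ (resp. $\mathcal C \cap \mathcal C^{\perp_H}=\{0\}$) to a rank condition on $G$ by exploiting that $G$ is a full-rank $k\times n$ matrix, so that the map $\phi\colon \mathbb F_q^k \to \mathcal C$ (resp. $\mathbb F_{q^2}^k\to\mathcal C$), $x\mapsto xG$, is a linear isomorphism. First I would record the elementary description of membership in the dual: a word $b\in\mathbb F_q^n$ lies in $\mathcal C^{\perp}$ if and only if $b$ is orthogonal to every row of $G$, equivalently $bG^{T}=0$, since the rows of $G$ span $\mathcal C$ and $bG^{T}=0$ exactly records that $b$ annihilates a spanning set; likewise $b\in\mathcal C^{\perp_H}$ if and only if $b\overline{G}^{T}=0$, using that $x\mapsto\overline x$ is additive and that $\overline{\lambda\,g}=\overline{\lambda}\,\overline{g}$. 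I would spell out the Euclidean case and then note the Hermitian case is obtained verbatim with $\overline G$ in place of $G$.

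Next I would analyze the intersection. Every codeword of $\mathcal C$ has the form $c=xG$ for a unique $x$, and such a $c$ additionally lies in $\mathcal C^{\perp}$ precisely when $cG^{T}=x(GG^{T})=0$. Hence $\phi$ restricts to an isomorphism from the left kernel $\{x : x(GG^{T})=0\}$ of the $k\times k$ matrix $GG^{T}$ onto $\mathcal C\cap\mathcal C^{\perp}$. Therefore $\mathcal C\cap\mathcal C^{\perp}=\{0\}$ if and only if that left kernel is trivial, which for a square matrix over a field is equivalent to $GG^{T}$ being nonsingular. The Hermitian statement then follows in exactly the same way: $c=xG\in\mathcal C^{\perp_H}$ iff $x(G\overline{G}^{T})=0$, so $\mathcal C\cap\mathcal C^{\perp_H}$ is isomorphic to the left kernel of $G\overline{G}^{T}$, and it is trivial iff $G\overline{G}^{T}$ is nonsingular. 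One may also remark, as an alternative phrasing not needed for the proof, that since $\dim\mathcal C+\dim\mathcal C^{\perp}=n$ always, the LCD property is equivalent to the direct sum decomposition $\mathbb F_q^n=\mathcal C\oplus\mathcal C^{\perp}$.

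This argument is pure linear algebra and presents no serious obstacle; the one point that deserves care—and which is in fact the whole reason LCD codes are a nontrivial notion—is that, unlike over $\mathbb R$, a full-rank matrix $G$ over a finite field can have $GG^{T}$ (or $G\overline{G}^{T}$) singular, and it is precisely this phenomenon that corresponds to a nonzero vector lying in both $\mathcal C$ and $\mathcal C^{\perp}$. So in writing the proof one must resist any ``positive definiteness'' intuition and rely only on the kernel/rank equivalence for square matrices, together with the injectivity of $x\mapsto xG$ coming from $\operatorname{rank}G=k$.
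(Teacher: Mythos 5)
Your proof is correct and is exactly the standard argument: identify $\mathcal C\cap\mathcal C^{\perp}$ (resp.\ $\mathcal C\cap\mathcal C^{\perp_H}$) with the left kernel of $GG^{T}$ (resp.\ $G\overline{G}^{T}$) via the isomorphism $x\mapsto xG$, and use that a square matrix over a field has trivial kernel iff it is nonsingular. The paper itself gives no proof, citing \cite{BJ16,CG14}, and the argument there is the same one you give, so there is nothing further to reconcile.
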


An $[n, k]$ linear code $C$ is said to be \emph{Euclidean self-orthogonal} (resp. \emph{Hermitian self-orthogonal})  if
$  \mathcal C \subseteq
\mathcal C^{\perp}$ (resp. $\mathcal C \subseteq
\mathcal C^{\perp_H} $). And $C$ is said to be \emph{Euclidean self-dual} (resp. \emph{Hermitian self-dual}) if
$  \mathcal C =
\mathcal C^{\perp}$ (resp. $\mathcal C =
\mathcal C^{\perp_H} $). It is easy to see that any self-orthogonal (resp. self-dual)
code has dimension $k\le \frac{n}{2}$ (resp. $k= \frac{n}{2}$).

The following  well-known result gives a complete characterization of
 Euclidean and Hermitian self-orthogonal codes.
\begin{proposition}\label{SOC}
 If $G$ is a generator matrix for the $[n, k]$ linear code $\mathcal C$, then $\mathcal C$ is an Euclidean (resp. a Hermitian) self-orthogonal code if and only if the $k \times k$ matrix $G G^{T}$ (resp. $G\overline{G}^{T}$) is the zero matrix.
\end{proposition}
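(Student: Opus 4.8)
The plan is to reduce the containment $\mathcal C \subseteq \mathcal C^{\perp}$ (resp. $\mathcal C \subseteq \mathcal C^{\perp_H}$) to a single matrix identity by using the fact that every codeword of $\mathcal C$ is of the form $xG$ with $x$ ranging over $\mathbb F_q^{k}$ (resp. $\mathbb F_{q^2}^{k}$), and then reading off entries of the Gram matrix with standard basis vectors.

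First I would record that $\mathcal C$ is Euclidean self-orthogonal exactly when $\sum_{i} u_i v_i = 0$ for all $u,v \in \mathcal C$. Writing $u = xG$ and $v = yG$, this says $xG(yG)^{T} = x\,(GG^{T})\,y^{T} = 0$ for all $x,y \in \mathbb F_q^{k}$. Next I would invoke the elementary fact that a $k\times k$ matrix $M$ over $\mathbb F_q$ satisfies $x M y^{T} = 0$ for all $x,y \in \mathbb F_q^{k}$ if and only if $M = 0$: the forward implication is immediate, and conversely taking $x = e_i$ and $y = e_j$, the standard basis vectors, isolates the entry $M_{ij}$. Applying this with $M = GG^{T}$ settles the Euclidean statement.

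For the Hermitian case I would run the same argument with the Hermitian form: $\mathcal C \subseteq \mathcal C^{\perp_H}$ holds if and only if $xG\,\overline{yG}^{T} = x\,(G\overline{G}^{T})\,\overline{y}^{T} = 0$ for all $x,y \in \mathbb F_{q^2}^{k}$. Since $y \mapsto \overline{y}$ is a bijection of $\mathbb F_{q^2}^{k}$ that fixes the standard basis vectors $e_j$ (whose coordinates lie in $\{0,1\}$), choosing $x = e_i$ and $y = e_j$ again extracts the $(i,j)$ entry of $G\overline{G}^{T}$, so the condition is equivalent to $G\overline{G}^{T} = 0$.

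The only point that needs a moment of care is this last bijectivity remark in the Hermitian setting — that quantifying over all $y$ is the same as quantifying over all $\overline{y}$, and that the test vectors $e_j$ are conjugation-invariant; apart from that, the whole proof is a direct unwinding of the definitions of the dual code and of self-orthogonality, mirroring the proof of Proposition~\ref{LDC} with ``nonsingular'' replaced by ``zero''.
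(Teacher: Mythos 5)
Your proof is correct. Note, however, that the paper states Proposition~\ref{SOC} as a well-known fact and supplies no proof of its own (just as it does for Proposition~\ref{LDC}), so there is no in-paper argument to compare against; your write-up simply fills in the standard verification. The reduction of $\mathcal C \subseteq \mathcal C^{\perp}$ to $x(GG^{T})y^{T}=0$ for all $x,y$, the extraction of the $(i,j)$ entry via standard basis vectors, and the observation in the Hermitian case that conjugation is a bijection fixing the $e_j$ are all exactly what is needed, and the one point you flag as requiring care is indeed the only non-mechanical step.
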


Any $[n,k]$ linear code over a field is equivalent to a code generated by a matrix of the form $[I_k: P]$ where $I_k$ denotes the $k\times k$ identity matrix.

\section{Existence and constructions of  Euclidean LCD  MDS codes}\label{Euc}
In this section, we study the existence of Euclidean LCD MDS codes and exhibit some constructions of those codes. We  completely determine all [n,k] Euclidean LCD MDS codes over $\mathbb F_q$ with $0\le k \le q+1$.
\begin{proposition}\label{LCD-G}
Let $G=\begin{bmatrix}
 I_k:  P
\end{bmatrix}$ be the generator matrix of an
$[n,k,d]$ linear code $\mathcal{C}$.  Let $\alpha\in
\mathbb{F}_{q}^{\times}$ and
$\mathcal{C}_{\alpha}$ be the linear code
generated by the matrix
$\begin{bmatrix}
 I_k: \alpha P
\end{bmatrix}$.
Then $\mathcal{C}_{\alpha}$ is
an $[n,k,d]$ LCD code if and only if
$-\frac{1}{\alpha^2} \not \in \text{Spec}(PP^T)$.
\end{proposition}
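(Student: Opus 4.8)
The plan is to translate the LCD property into a determinant condition via Proposition~\ref{LDC}, and then recognize that determinant as a nonzero scalar times the characteristic polynomial of $PP^T$ evaluated at $-1/\alpha^2$.

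First I would record that $\mathcal{C}_{\alpha}$ has the same parameters $[n,k,d]$ as $\mathcal{C}$. Indeed, the linear map on $\mathbb{F}_{q}^{n}$ that fixes the first $k$ coordinates and multiplies each of the last $n-k$ coordinates by $\alpha\in\mathbb{F}_{q}^{\times}$ is a monomial transformation, hence a Hamming isometry, and it carries the row space of $\begin{bmatrix} I_k: P\end{bmatrix}$ onto the row space of $\begin{bmatrix} I_k: \alpha P\end{bmatrix}$. So length, dimension and minimum distance are all preserved, and only the LCD property remains to be analyzed.

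Next, set $G_{\alpha}=\begin{bmatrix} I_k: \alpha P\end{bmatrix}$, the generator matrix of $\mathcal{C}_{\alpha}$. By Proposition~\ref{LDC}, $\mathcal{C}_{\alpha}$ is Euclidean LCD if and only if $G_{\alpha}G_{\alpha}^{T}$ is nonsingular. A block multiplication gives
\[
G_{\alpha}G_{\alpha}^{T}=I_kI_k^{T}+(\alpha P)(\alpha P)^{T}=I_k+\alpha^{2}PP^{T}.
\]
Since $\alpha\neq 0$, writing $I_k+\alpha^{2}PP^{T}=\alpha^{2}\bigl(\tfrac{1}{\alpha^{2}}I_k+PP^{T}\bigr)$ and factoring $\alpha^{2}$ out of each row yields
\[
\det\!\bigl(I_k+\alpha^{2}PP^{T}\bigr)=\alpha^{2k}\det\!\Bigl(PP^{T}+\tfrac{1}{\alpha^{2}}I_k\Bigr)=(-\alpha^{2})^{k}\,\chi_{PP^{T}}\!\Bigl(-\tfrac{1}{\alpha^{2}}\Bigr),
\]
where $\chi_{PP^{T}}(x)=\det(xI_k-PP^{T})$ denotes the characteristic polynomial of $PP^{T}$. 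As $(-\alpha^{2})^{k}\neq 0$, the matrix $G_{\alpha}G_{\alpha}^{T}$ is singular precisely when $-1/\alpha^{2}$ is a root of $\chi_{PP^{T}}$; and since $-1/\alpha^{2}\in\mathbb{F}_{q}$, this happens precisely when $-1/\alpha^{2}\in\text{Spec}(PP^{T})$. Combining this with Proposition~\ref{LDC} gives the statement.

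The computation itself is routine; there is no substantive obstacle. The only two points that deserve a moment's care are that one should reason through the characteristic polynomial rather than through a diagonalization, because $PP^{T}$ need not be diagonalizable over $\mathbb{F}_{q}$; and that the notion of $\text{Spec}$ is unambiguous here, since the candidate eigenvalue $-1/\alpha^{2}$ already lies in $\mathbb{F}_{q}$, so it is a root of $\chi_{PP^{T}}$ over $\mathbb{F}_{q}$ if and only if it is a root over the algebraic closure.
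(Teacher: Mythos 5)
Your proposal is correct and follows essentially the same route as the paper: compute $G_{\alpha}G_{\alpha}^{T}=I_k+\alpha^{2}PP^{T}$ and invoke Proposition~\ref{LDC}, with nonsingularity equivalent to $-1/\alpha^{2}\not\in\text{Spec}(PP^{T})$. The extra care you take (the monomial-transformation argument for preservation of $[n,k,d]$ and the characteristic-polynomial justification of the spectral condition) only fills in details the paper leaves implicit.
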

\begin{proof}
By setting $G_{\alpha}=[I_k: \alpha P]$, one has
\begin{align*}
G_{\alpha}G_{\alpha}^T=&
\begin{bmatrix}
 I_k: \alpha P
\end{bmatrix}
\begin{bmatrix}
I_k\\
\alpha P^T
\end{bmatrix}\\
=& 
I_k+\alpha^2PP^T.
\end{align*}
The matrix $G_{\alpha}G_{\alpha}^T$
is nonsingular if and only if
$-\frac{1}{\alpha^2} \not \in \text{Spec}(PP^T)$.
The result follows from Proposition \ref{LDC}.
\end{proof}

\begin{proposition}\label{E-LCD}
Let $k$ be an  integer such that
$0\leq k \leq q-2$ for $p=2$ or
$0\leq k \leq \frac{q-3}{2}$ for
odd prime $p$. Let $\mathcal{C}$ be
an $[n,k,d]$ linear code with  generator matrix
$G=\begin{bmatrix} I_k: P\end{bmatrix}$. Then there exists
 $\alpha\in \mathbb{F}_q^{\times}$
such that the linear
code $\mathcal{C}_{\alpha}$ generated by the matrix $G_{\alpha}
=\begin{bmatrix} I_k: \alpha P\end{bmatrix}$
is an LCD code.
\end{proposition}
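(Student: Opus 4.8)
The plan is to reduce the claim to an elementary counting argument built on Proposition~\ref{LCD-G}. First I would invoke that proposition: $\mathcal{C}_{\alpha}$ is an LCD code if and only if $-\frac{1}{\alpha^{2}}\notin\text{Spec}(PP^{T})$. So it suffices to produce a single $\alpha\in\mathbb{F}_{q}^{\times}$ for which the element $-\frac{1}{\alpha^{2}}$ misses the finite set $\text{Spec}(PP^{T})$; the whole problem is then to show that the available values of $-\frac{1}{\alpha^{2}}$ outnumber the eigenvalues.

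Next I would bound the relevant sets. Since $PP^{T}$ is a $k\times k$ matrix, $|\text{Spec}(PP^{T})|\le k$. Because $\alpha^{2}\ne 0$, the quantity $-\frac{1}{\alpha^{2}}$ is never $0$, so the condition $-\frac{1}{\alpha^{2}}\in\text{Spec}(PP^{T})$ is equivalent to $\alpha^{2}\in S$, where $S:=\{-\lambda^{-1}:\lambda\in\text{Spec}(PP^{T}),\ \lambda\ne 0\}$ and $|S|\le k$. Thus the ``forbidden'' values of $\alpha$ are exactly those whose square lies in the set $S$ of size at most $k$, and it remains to check that the squares in $\mathbb{F}_{q}^{\times}$ cannot all fall inside $S$.

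Then I would analyse the image of the squaring map on $\mathbb{F}_{q}^{\times}$. When $p=2$ this map is a bijection of $\mathbb{F}_{q}^{\times}$, hence $\{\alpha^{2}:\alpha\in\mathbb{F}_{q}^{\times}\}$ has $q-1$ elements; when $p$ is odd it is two-to-one onto the set of nonzero squares, which has $\frac{q-1}{2}$ elements. The hypotheses $k\le q-2$ (for $p=2$) and $k\le\frac{q-3}{2}$ (for $p$ odd) say precisely that this image is strictly larger than $k\ge|S|$. By the pigeonhole principle there is some $\alpha\in\mathbb{F}_{q}^{\times}$ with $\alpha^{2}\notin S$, and for this $\alpha$ we get $-\frac{1}{\alpha^{2}}\notin\text{Spec}(PP^{T})$; Proposition~\ref{LCD-G} then yields that $\mathcal{C}_{\alpha}$ is an LCD code, which is what we want.

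I do not expect a serious obstacle here, the whole argument being a counting comparison. The only subtleties are purely bookkeeping: that the eigenvalue $0$ imposes no constraint (since $-\frac{1}{\alpha^{2}}\ne 0$), that eigenvalues lying outside $\mathbb{F}_{q}$ are also automatically avoided, and that the number of nonzero squares in odd characteristic is exactly $\frac{q-1}{2}$ — this last fact being precisely the reason the admissible range of $k$ is halved when $p$ is odd.
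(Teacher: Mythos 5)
Your proposal is correct and follows essentially the same route as the paper: reduce to Proposition~\ref{LCD-G} and then count how many $\alpha\in\mathbb{F}_q^{\times}$ can be excluded by the at most $k$ eigenvalues of $PP^{T}$. The paper bounds the set of forbidden $\alpha$ directly (at most $k$ for $p=2$, at most $2k$ for odd $p$), while you equivalently compare $|\mathrm{Spec}(PP^{T})|$ with the size of the image of the squaring map; these are the same counting fact read from opposite ends.
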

\begin{proof}Recall that $\text{Spec}(PP^T)$ has at most $k$ elements.
Let $S=\{\alpha\in \mathbb{F}_q^{\times}
: -\frac{1}{\alpha^2} \in \text{Spec}(PP^T)\}$.

When $p=2$,  we have $\#S\leq k\leq q-2<\#
\mathbb{F}_q^{\times}$. Hence,
$\mathbb{F}_q^{\times} \backslash S
\neq \emptyset$.
Take any $\alpha \in \mathbb{F}_q^{\times} \backslash S$.
From Proposition \ref{LCD-G}, the linear code
$\mathcal{C}_{\alpha}$ is an $[n,k,d]$ LCD code.

When $p$ is odd, we have  $\#S\leq 2k\leq q-3<\#
\mathbb{F}_q^{\times}$.
Hence $\mathbb{F}_q^{\times} \backslash S
 \neq \emptyset$.
 Take any $\alpha \in \mathbb{F}_q^{\times} \backslash S$. From Proposition \ref{LCD-G}, the linear code
$\mathcal{C}_{\alpha}$ is an $[n,k,d]$ LCD code.
\end{proof}

\begin{theorem}\label{k,n-k}
Assume that there exists an $[n,k,d]$ linear code $\mathcal C$ with
 $0\le k\le q-2$ or $0\le n-k\le q-2$ for $p=2$, and $0\le k\le \frac{q-3}{2}$ or $0\le n-k\le \frac{q-3}{2}$ for odd $p$. Then there exists an LCD code having the same parameters as $\mathcal C$.
\end{theorem}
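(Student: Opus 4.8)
The plan is to sharpen the counting behind Proposition~\ref{E-LCD} so that it reads off the codimension $n-k$ as well as the dimension $k$, and then to invoke Proposition~\ref{LCD-G}. (The statement is symmetric under passing to the dual, so one could alternatively argue through $\mathcal C^\perp$; I indicate this at the end, but the direct route is shorter.)

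First I would put $\mathcal C$ in standard form, so that its generator matrix is $G=[I_k:P]$ with $P$ of size $k\times(n-k)$; this is always possible after permuting coordinates, and costs nothing, since a coordinate permutation $\Pi$ satisfies $\Pi\Pi^T=I_n$, hence preserves $GG^T$ and therefore, by Proposition~\ref{LDC}, the LCD property, along with $n$, $k$ and $d$. For $\alpha\in\mathbb{F}_q^{\times}$ let $\mathcal C_\alpha$ be the code generated by $[I_k:\alpha P]$. Rescaling the last $n-k$ coordinates by $\alpha$ is a Hamming-weight-preserving bijection of $\mathbb{F}_q^n$ carrying $\mathcal C$ onto $\mathcal C_\alpha$, so $\mathcal C_\alpha$ has the same parameters $[n,k,d]$ as $\mathcal C$; and by Proposition~\ref{LCD-G}, $\mathcal C_\alpha$ is LCD precisely when $\alpha$ avoids the bad set $S=\{\alpha\in\mathbb{F}_q^{\times}:\ -\alpha^{-2}\in\text{Spec}(PP^T)\}$. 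Thus everything reduces to showing $S\neq\mathbb{F}_q^{\times}$.

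The key point is the bound $\#\bigl(\text{Spec}(PP^T)\setminus\{0\}\bigr)\le\min(k,n-k)$. On one hand $PP^T$ is $k\times k$, so it has at most $k$ distinct eigenvalues. On the other hand, if $PP^Tv=\lambda v$ with $\lambda\neq 0$ then $P^Tv\neq 0$ and $(P^TP)(P^Tv)=\lambda(P^Tv)$, so every nonzero eigenvalue of $PP^T$ is an eigenvalue of the $(n-k)\times(n-k)$ matrix $P^TP$, of which there are at most $n-k$. Since $-\alpha^{-2}\neq 0$, membership $\alpha\in S$ forces $-\alpha^{-2}$ to be one of these at most $\min(k,n-k)$ nonzero eigenvalues, and for a fixed value $\lambda$ the equation $\alpha^2=-\lambda^{-1}$ has at most two solutions in $\mathbb{F}_q$ when $p$ is odd and exactly one when $p=2$ (the Frobenius $x\mapsto x^2$ being bijective). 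Hence $\#S\le 2\min(k,n-k)$ for odd $p$, and $\#S\le\min(k,n-k)$ for $p=2$.

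Finally I would feed in the hypothesis. Since $0\le k\le n$, the condition ``$0\le k\le q-2$ or $0\le n-k\le q-2$'' is exactly $\min(k,n-k)\le q-2$, whence for $p=2$ we get $\#S\le q-2<q-1=\#\mathbb{F}_q^{\times}$; likewise the odd-$p$ hypothesis gives $\min(k,n-k)\le\frac{q-3}{2}$ and so $\#S\le q-3<q-1$. In either case we may pick $\alpha\in\mathbb{F}_q^{\times}\setminus S$, and $\mathcal C_\alpha$ is then an $[n,k,d]$ LCD code, as wanted. I do not anticipate a genuine obstacle — the content is purely the counting — the one thing to get right is exactly this eigenvalue bookkeeping, which replaces the crude estimate $\#\text{Spec}(PP^T)\le k$ used in Proposition~\ref{E-LCD} by the symmetric bound $\min(k,n-k)$. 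If one prefers to avoid it: in the small-codimension case, $\mathcal C^\perp$ has dimension $n-k$, so Proposition~\ref{E-LCD} yields an LCD code $\mathcal D$ with the parameters of $\mathcal C^\perp$; then $\mathcal D^\perp$ is LCD (a code is LCD iff its dual is, since $(\mathcal D^\perp)^\perp=\mathcal D$) and, being obtained from $\mathcal C$ by permuting and rescaling coordinates, has parameters $[n,k,d]$.
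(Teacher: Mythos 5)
Your argument is correct, and your main route is genuinely different from the paper's. The paper splits the hypothesis into two cases: when the dimension $k$ is small it applies Proposition~\ref{E-LCD} directly, and when the codimension $n-k$ is small it writes $\mathcal C$ with generator matrix $[-Q:I_k]$, applies Proposition~\ref{E-LCD} to $\mathcal C^{\perp}$ (generated by $[I_{n-k}:Q]$) to get an LCD code $\mathcal C'$, and then takes $\mathcal C'^{\perp}$, checking via its generator matrix $[-Q:\tfrac{1}{\alpha}I_k]$ that the minimum distance is preserved --- this is exactly the fallback you sketch in your last sentences. Your primary route instead unifies both cases by sharpening the eigenvalue count: since the nonzero eigenvalues of the $k\times k$ matrix $PP^T$ are also eigenvalues of the $(n-k)\times(n-k)$ matrix $P^TP$, and since $-\alpha^{-2}$ is never zero, the bad set $S$ has size at most $\min(k,n-k)$ (resp.\ $2\min(k,n-k)$) for $p=2$ (resp.\ $p$ odd), which the hypothesis makes strictly smaller than $q-1$. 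All the supporting steps check out: the passage to standard form costs nothing because a permutation matrix $\Pi$ satisfies $\Pi\Pi^T=I_n$; rescaling the last $n-k$ coordinates preserves $[n,k,d]$; and only nonzero eigenvalues can contribute to $S$. What your approach buys is a single, case-free proof that makes the $k\leftrightarrow n-k$ symmetry of the statement transparent and in effect strengthens Proposition~\ref{E-LCD} to a hypothesis on $\min(k,n-k)$; what the paper's approach buys is that it reuses Proposition~\ref{E-LCD} as a black box with no new linear algebra, at the cost of a short duality detour and a separate verification that the distance survives it.
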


\begin{proof}
If $0\le k\le q-2$ for $p=2$ or $0\le k\le \frac{q-3}{2}$ for odd $p$, the result follows from Proposition \ref{E-LCD}.\\
Assume $0\le n-k\le q-2$ for $p=2$ or $0\le n-k\le \frac{q-3}{2}$ for odd $p$. Let, without loss of generality (up to permutation of the coodinates), $\mathcal C$ be an [n,k,d] code with generator matrix $G=[-Q:I_k]$, where $Q$ is a $k\times (n-k)$ matrix. Then, $\mathcal C^{\perp}$
is an [n,n-k] code of generator matrix $[I_{n-k}:Q]$. From Proposition \ref{E-LCD}, there is an $\alpha\in \mathbb F_q^\times$ such that
$[I_{n-k}:\alpha Q]$ generates an LCD code $\mathcal C'$. Hence,  $\mathcal C'^{\perp}$ is an [n,k] LCD code.  Since the  generator matrix of  $\mathcal C'^{\perp}$ is
$[-Q: \frac{1}{\alpha} I_k]$, the minimum distance of $\mathcal C'^{\perp}$ is same as the one of $\mathcal C$. This completes the proof.

\end{proof}

\begin{lemma}\label{SOC-L}
Let $\mathcal{C}$ be an
$[n,k,d]$ linear code generated by the matrix $G=\begin{bmatrix}
I_k: P \end{bmatrix}$. Then
$\mathcal{C}$ is self-orthogonal if and only if
$PP^T=-I_k$.
\end{lemma}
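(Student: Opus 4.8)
The plan is to reduce everything to Proposition \ref{SOC} and a one-line block-matrix computation. Since $G=\begin{bmatrix} I_k: P\end{bmatrix}$ contains the identity block $I_k$, it has rank $k$ and is a legitimate generator matrix of $\mathcal{C}$, so Proposition \ref{SOC} applies verbatim: $\mathcal{C}$ is Euclidean self-orthogonal if and only if the $k\times k$ matrix $GG^T$ is the zero matrix. Thus the whole statement hinges on evaluating $GG^T$.

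The second step is the block computation, which is exactly the one already performed in the proof of Proposition \ref{LCD-G} specialized to $\alpha=1$:
\[
GG^T=\begin{bmatrix} I_k: P\end{bmatrix}\begin{bmatrix} I_k\\ P^T\end{bmatrix}=I_k+PP^T .
\]
Hence $GG^T=0$ precisely when $I_k+PP^T=0$, i.e. when $PP^T=-I_k$, which is the desired equivalence.

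There is essentially no obstacle here: the argument is immediate once Proposition \ref{SOC} is invoked. If one wished to avoid that citation, one could argue directly that the rows $g_1,\dots,g_k$ of $G$ form a basis of $\mathcal{C}$, that $\mathcal{C}\subseteq\mathcal{C}^{\perp}$ is equivalent to $\langle g_i,g_j\rangle=0$ for all $i,j$, and hence to $GG^T=0$; the same computation then finishes the proof. I would present the short version using Proposition \ref{SOC}.
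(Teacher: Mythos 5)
Your proof is correct and is essentially identical to the paper's: both compute $GG^T = I_k + PP^T$ via the block decomposition and then invoke Proposition \ref{SOC} to conclude that self-orthogonality is equivalent to $PP^T = -I_k$. The optional direct argument you sketch is fine but unnecessary, as you note.
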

\begin{proof}
Note that
\begin{align*}
GG^T=&
\begin{bmatrix}
I_k: P \end{bmatrix}
\begin{bmatrix}
I_k\\ P^T \end{bmatrix}\\
=& I_k+PP^T.
\end{align*}
From Proposition \ref{SOC},
$\mathcal{C}$ is self-orthogonal if and only if
$PP^T=-I_k$.
\end{proof}

\begin{corollary}\label{SOC2LDC}
Let $\mathcal{C}$ be an
$[n,k,d]$ self-orthogonal linear code generated by the  matrix $G=\begin{bmatrix}
I_k: P \end{bmatrix}$. Then for any
$\alpha\in\mathbb{F}_q \backslash
\{0,1,-1\}$, the linear
code $\mathcal{C}_{\alpha}$ generated by the matrix  $G_{\alpha}
=\begin{bmatrix} I_k: \alpha P\end{bmatrix}$
is an $[n,k,d]$ LCD code.
\end{corollary}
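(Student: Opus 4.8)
The plan is to combine Lemma~\ref{SOC-L} with Proposition~\ref{LCD-G}, which makes the corollary almost immediate. First I would observe that, since $\mathcal{C}$ is self-orthogonal and has generator matrix $[I_k:P]$, Lemma~\ref{SOC-L} gives $PP^T=-I_k$. Consequently the characteristic polynomial of $PP^T$ is $(X+1)^k$, so its only eigenvalue over $\mathbb{F}_q$ is $-1$; that is, $\text{Spec}(PP^T)=\{-1\}$.

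Next I would apply Proposition~\ref{LCD-G} to this same $P$: the code $\mathcal{C}_{\alpha}$ generated by $[I_k:\alpha P]$ is an LCD code if and only if $-\frac{1}{\alpha^2}\notin\text{Spec}(PP^T)=\{-1\}$, i.e. if and only if $-\frac{1}{\alpha^2}\neq -1$, equivalently $\alpha^2\neq 1$. For $\alpha\in\mathbb{F}_q\setminus\{0,1,-1\}$ we have $\alpha\neq 0$ and $\alpha\neq\pm 1$, hence $\alpha^2\neq 1$; this holds in every characteristic, including $p=2$, where $\{0,1,-1\}=\{0,1\}$ and $\alpha^2=1\iff\alpha=1$. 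Therefore $\mathcal{C}_{\alpha}$ is LCD.

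It then remains to check that $\mathcal{C}_{\alpha}$ has the same parameters $[n,k,d]$ as $\mathcal{C}$. The matrix $[I_k:\alpha P]$ obviously has rank $k$ and $n$ columns, so $\mathcal{C}_{\alpha}$ has length $n$ and dimension $k$. For the minimum distance, note that $\mathcal{C}_{\alpha}$ is obtained from $\mathcal{C}$ by rescaling each of the last $n-k$ coordinates by the nonzero scalar $\alpha$; the diagonal (monomial) map $(c_1,\dots,c_k,c_{k+1},\dots,c_n)\mapsto(c_1,\dots,c_k,\alpha c_{k+1},\dots,\alpha c_n)$ is a bijection from $\mathcal{C}$ onto $\mathcal{C}_{\alpha}$ that preserves the Hamming support of every vector, hence preserves weights. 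So the weight distributions agree and $d$ is unchanged.

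I do not expect a genuine obstacle here: the only substantive point is the passage from "$G_\alpha G_\alpha^T$ nonsingular" to a condition on $\text{Spec}(PP^T)$, and that is precisely Proposition~\ref{LCD-G}; everything else is bookkeeping. (Note that if $q\le 3$ the set $\mathbb{F}_q\setminus\{0,1,-1\}$ is empty and the statement is vacuously true, so one may tacitly assume $q>3$.)
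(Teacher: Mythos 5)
Your proof is correct and follows exactly the paper's route: Lemma~\ref{SOC-L} gives $PP^T=-I_k$, hence $\text{Spec}(PP^T)=\{-1\}$, and Proposition~\ref{LCD-G} then reduces the LCD condition to $\alpha^2\neq 1$. The extra verifications (that the parameters $[n,k,d]$ are preserved and the vacuous case $q\le 3$) are harmless bookkeeping that the paper leaves implicit in Proposition~\ref{LCD-G}.
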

\begin{proof}
Since $\mathcal{C}$ is a
self-orthogonal code, from Lemma \ref{SOC-L}, $\text{Spec}(PP^T)
=\{-1\}$. From Proposition \ref{LCD-G},
this corollary follows.
\end{proof}

\begin{corollary}\label{k,n-k,MDS}
There exist $[n,k,n-k+1]$ LCD MDS codes
and  $[n,n-k,k+1]$ LCD MDS codes where
 $0\le k\le  n\le q+1$ is such that
$0\leq k \leq q-2$ for $p=2$ and
$0\leq k \leq \frac{q-3}{2}$ for
odd prime $p$.
\end{corollary}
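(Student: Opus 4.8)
The plan is to start from a known MDS code of the prescribed parameters, bring it into systematic form, and then apply the column–rescaling trick of Proposition \ref{E-LCD}, checking that rescaling cannot destroy the MDS property; the dual then takes care of the second family.

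First I would invoke the existence of $q$-ary MDS codes of all admissible lengths: for every pair $(n,k)$ with $0\le k\le n\le q+1$ there is a $q$-ary $[n,k,n-k+1]$ MDS code, e.g. a (doubly) extended generalized Reed--Solomon code, the extreme cases $k=0$ and $k=n$ being trivial. By the remark preceding this section, such a code is equivalent to one with generator matrix of the form $G=[I_k:P]$; passing to an equivalent code (a coordinate permutation together with a diagonal rescaling) changes neither the length, the dimension, nor the minimum distance, so the systematic code is still MDS. Fix such a $\mathcal C$ with $G=[I_k:P]$.

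Next, under the hypothesis on $k$ (namely $0\le k\le q-2$ for $p=2$ and $0\le k\le\frac{q-3}{2}$ for odd $p$), Proposition \ref{E-LCD} yields $\alpha\in\mathbb F_q^{\times}$ for which the code $\mathcal C_\alpha$ generated by $G_\alpha=[I_k:\alpha P]$ is an LCD code. The point that needs care is that $\mathcal C_\alpha$ remains MDS, and here I would use Proposition \ref{MDS}(iv): a code is MDS precisely when every $k$ columns of a generator matrix are linearly independent. The columns of $G_\alpha$ are obtained from those of $G$ by multiplying the last $n-k$ of them by the nonzero scalar $\alpha$, and scaling columns by nonzero scalars neither creates nor removes linear dependences among any subset of columns; hence every $k$ columns of $G_\alpha$ are independent, so $\mathcal C_\alpha$ is MDS. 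Thus $\mathcal C_\alpha$ is an $[n,k,n-k+1]$ LCD MDS code.

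Finally, for the family $[n,n-k,k+1]$ I would pass to the Euclidean dual $\mathcal C_\alpha^{\perp}$: it has parameters $[n,n-k,k+1]$ and is MDS by Proposition \ref{MDS}, and it is LCD because the LCD property is self-dual, since $(\mathcal C_\alpha^{\perp})^{\perp}=\mathcal C_\alpha$ forces $\mathcal C_\alpha^{\perp}\cap(\mathcal C_\alpha^{\perp})^{\perp}=\mathcal C_\alpha^{\perp}\cap\mathcal C_\alpha=\{0\}$. (Alternatively, one can apply Theorem \ref{k,n-k} to the codimension.) I do not expect a real obstacle in this argument: the only delicate step is the stability of the MDS property under $P\mapsto\alpha P$, and that is immediate from the columnwise characterization in Proposition \ref{MDS}(iv); all other ingredients—existence of Reed--Solomon MDS codes up to length $q+1$, systematic form, and Proposition \ref{E-LCD}—are already available.
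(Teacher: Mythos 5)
Your proposal is correct and follows essentially the same route as the paper: the paper also starts from an extended Reed--Solomon MDS code in systematic form and invokes the rescaling argument (packaged there as Theorem \ref{k,n-k}, which is Proposition \ref{E-LCD} plus duality) before dualizing to get the $[n,n-k,k+1]$ family. Your explicit check that $P\mapsto\alpha P$ preserves the MDS property via Proposition \ref{MDS}(iv) is a point the paper leaves implicit in the phrase ``same parameters,'' but it is the same argument.
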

\begin{proof}
Note that there is an $[n,k]$ MDS code for the parameters $n$ and $k$. From Theorem \ref{k,n-k}, there exist
  $[n,k,n-k+1]$ LCD MDS codes. Since the dual of a
LCD MDS code is again an LCD MDS code,
there exist    $[n,n-k,k+1]$ LCD MDS codes.
\end{proof}

\begin{lemma}\label{q+2}
Let $m>1$. Then
there exist LCD MDS codes with
parameters $[2^m+2,3,2^m]$
and $[2^m+2, 2^m-1, 4]$.
\end{lemma}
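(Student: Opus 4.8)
The plan is to exhibit, for each of the two parameter sets, an explicit generator matrix in systematic form $[I_k : P]$ and then invoke Proposition \ref{LCD-G} (or a direct check via Proposition \ref{LDC}) together with Proposition \ref{MDS} to verify simultaneously the MDS property and the LCD property. Since $[2^m+2,3,2^m]$ and $[2^m+2,2^m-1,4]$ are dual parameters and the dual of an LCD MDS code is again LCD MDS (as used in Corollary \ref{k,n-k,MDS}), it suffices to construct one of the two; I would construct the $[2^m+2,3,2^m]$ code and obtain the other by dualizing.

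First I would recall that over $\mathbb{F}_{2^m}$ there is a well-known $[2^m+2,3]$ MDS code — the extended/doubly-extended Reed--Solomon-type code, i.e. the $3$-dimensional MDS code whose existence underlies the MDS conjecture exceptions in even characteristic. Concretely, take the generator matrix whose columns are $(1,x,x^2)$ for $x$ ranging over $\mathbb{F}_{2^m}$, together with the two extra columns $(0,0,1)$ and $(0,1,0)$; every $3$ columns are linearly independent (this is the standard Vandermonde/conic argument, valid precisely because $\mathrm{char}=2$ allows the two points at infinity to be added), so by Proposition \ref{MDS}(iv) the code is MDS. Then I would put this matrix in systematic form $[I_3:P]$, compute the $3\times 3$ Gram matrix $G G^T = I_3 + PP^T$, and check it is nonsingular; if the natural choice gives a singular Gram matrix, I would scale the non-systematic block by a suitable $\alpha\in\mathbb{F}_{2^m}^\times$ and apply Proposition \ref{LCD-G}, noting that $\mathrm{Spec}(PP^T)$ has at most $3$ elements while $\mathbb{F}_{2^m}^\times$ has $2^m-1 \ge 3$ elements (using $m>1$), so a good $\alpha$ with $-1/\alpha^2 \notin \mathrm{Spec}(PP^T)$ exists; scaling does not change $n,k$ or the minimum distance, so the scaled code is still $[2^m+2,3,2^m]$ and now LCD.

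Once the $[2^m+2,3,2^m]$ LCD MDS code $\mathcal{C}$ is in hand, its Euclidean dual $\mathcal{C}^\perp$ has parameters $[2^m+2, 2^m-1]$, is MDS by Proposition \ref{MDS}(ii) hence has minimum distance $2^m+2-(2^m-1)+1 = 4$, and is LCD because $\mathcal{C}$ is LCD (the intersection condition $\mathcal{C}\cap\mathcal{C}^\perp=\{0\}$ is symmetric in $\mathcal{C}$ and $\mathcal{C}^\perp$). This gives the $[2^m+2,2^m-1,4]$ LCD MDS code and completes the proof.

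The main obstacle is the explicit verification that the systematic-form matrix $[I_3:P]$ coming from the doubly-extended Reed--Solomon construction has (after an appropriate scaling $\alpha$) a nonsingular Gram matrix $I_3+\alpha^2 PP^T$ over $\mathbb{F}_{2^m}$; this reduces to checking that $-1/\alpha^2$ avoids the at-most-three eigenvalues of $PP^T$, which is a counting argument that works precisely because $m>1$ forces $2^m-1\ge 3$. Handling the boundary correctly — in particular confirming that the two "points at infinity" columns keep the code MDS in characteristic $2$, and that the admissible range of $k$ in Corollary \ref{k,n-k,MDS} does not already cover $n=2^m+2$ (it does not, since there $n=q+2>q+1$) — is the part that needs care rather than heavy computation.
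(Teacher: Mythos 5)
Your overall route coincides with the paper's: take the $3$-dimensional doubly-extended Reed--Solomon generator matrix over $\mathbb{F}_{2^m}$, get the MDS property from the fact that any three columns are linearly independent (Proposition \ref{MDS}(iv)), force the LCD property by a rescaling plus a Gram-matrix check via Proposition \ref{LDC}, and obtain the $[2^m+2,2^m-1,4]$ code by dualizing. The only structural difference is the rescaling: the paper stays in Vandermonde form and rescales a single column (replacing $(1,0,0)^T$ by $(\alpha_2,0,0)^T$) so that $AA^T$ can be computed entry by entry, whereas you pass to systematic form $[I_3:P]$, rescale the whole block $P$, and fall back on a pigeonhole over $\mathrm{Spec}(PP^T)$. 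Note that some rescaling is genuinely unavoidable: for the unscaled doubly-extended matrix the row of $AA^T$ corresponding to $(1,\dots,1)$ sums $q$ ones and is identically zero, so the ``natural choice'' in your first step always fails and the fallback is always invoked.

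The gap is in that fallback, and it is exactly at $m=2$. In characteristic $2$ the map $\alpha\mapsto -1/\alpha^2=1/\alpha^2$ is a bijection of $\mathbb{F}_{2^m}^{\times}$, so the set of bad scalars $S=\{\alpha:\ -1/\alpha^2\in\mathrm{Spec}(PP^T)\}$ has at most $3$ elements and you need the \emph{strict} inequality $\#S<\#\mathbb{F}_{2^m}^{\times}=2^m-1$. Your ``$2^m-1\ge 3$'' does not give this when $m=2$: over $\mathbb{F}_4$ there are exactly three nonzero scalars and possibly three bad ones, so the counting proves nothing --- this is precisely the case excluded from Proposition \ref{E-LCD}, which requires $k\le q-2$, i.e.\ $3\le 2$ here. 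For $m\ge 3$ your argument is fine, but to close the lemma you must settle $q=4$ by an explicit computation. One that works is to rescale two of the three extra columns: over $\mathbb{F}_4=\{0,1,\omega,\omega^2\}$ the matrix
$G=\begin{bmatrix}1&1&1&\omega&0&0\\ 1&\omega&\omega^2&0&1&0\\ 1&\omega^2&\omega&0&0&\omega\end{bmatrix}$
generates a $[6,3,4]$ MDS code with $\det(GG^T)=(1+\omega^2)^2\ne 0$, hence LCD. (Incidentally, the same boundary case is delicate in the paper's own proof: $AA^T$ is symmetric, so its $(3,2)$ entry equals $\beta$, and for $m=2$ the lower $2\times 2$ block becomes the all-ones matrix, which is singular; so an extra rescaling as above is needed there too.) Your dualization step for the $[2^m+2,2^m-1,4]$ parameters is correct and matches the paper.
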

\begin{proof}
Let $q=2^m$ and $\mathbb{F}_q^{\times}
=\{\alpha_1,\cdots, \alpha_{q-1}\}$, where
$\alpha_1=1$.
Let $A$ be the $3\times (q+2)$ matrix given by
$$
A=
\begin{bmatrix}
1&\cdots & 1& \alpha_2 &0 &0\\
\alpha_1&\cdots & \alpha_{q-1}& 0 &1 &0\\
\alpha_1^2&\cdots & \alpha_{q-1}^2& 0 &0 &1
\end{bmatrix}.
$$
Any three columns of $A$
are linearly independent. Hence,
the linear code
with generator matrix $A$ or parity check matrix
$A$ is an MDS code. Further, we have
$$
AA^T=
\begin{bmatrix}
1+\alpha_2^2 & 0& 0\\
0&1&\beta\\
0&0&1
\end{bmatrix},
$$
where $\beta=\left\{
               \begin{array}{ll}
                 0, & \hbox{$m>2$} \\
                 1, & \hbox{$m=2$}
               \end{array}
             \right.
.$
Note that $\alpha_2^2\neq 1$.
the matrix $AA^T$ is nonsingular.
Hence, the linear code
with  generator matrix $A$ or
 parity check matrix
$A$ is an LCD MDS code.
\end{proof}
\begin{remark}
The unique $[2^1+2,1,4]$ MDS
code is generated by $\begin{bmatrix}
1&1&1&1
\end{bmatrix}$ and is self-orthogonal.
Hence, there do not exist
LCD MDS codes over $\mathbb{F}_2$ with parameters
$[2^m+2,3,2^m]$ or $[2^m+2, 2^m-1,
4]$.
\end{remark}

Let
$\gamma$ be a primitive element of
$\mathbb{F}_q$ and $k$ be a positive integer, with $k|(q-1)$ and
$k<q-1$. Then $\gamma^{\frac{q-1}{k}}$
generates a subgroup with order $k$ of
$\mathbb{F}_q^{\times}$.
Let the subgroup $\langle \gamma^{\frac{q-1}{k}} \rangle
=\{\alpha_0,\cdots, \alpha_{k-1}\}$, where
$\alpha_i\in \mathbb{F}_q^{\times}$ and
$\alpha_0=1$. For any integer $t$, we have
$$
\alpha_0^t+\cdots + \alpha_{k-1}^t
=\left\{
   \begin{array}{ll}
     k, & \hbox{$t\equiv 0 \mod k$;} \\
     0, & \hbox{otherwise.}
   \end{array}
 \right.
$$
Thus, for any $\beta \in \mathbb{F}_q^{\times}$,
we have
\begin{equation}\label{beta-t}
(\beta\alpha_0)^t+\cdots + (\beta\alpha_{k-1})^t
=\left\{
   \begin{array}{ll}
     \beta^tk, & \hbox{$t\equiv 0 \mod k$;} \\
     0, & \hbox{otherwise.}
   \end{array}
 \right.
\end{equation}
Let $A_{\beta}$ be the $k\times k$ matrix given by
\begin{equation}\label{A-beta}
A_{\beta}=\begin{bmatrix}
1 & 1& \cdots& 1&1\\
\beta\alpha_0&\beta\alpha_1&\cdots
&\beta\alpha_{k-2}&\beta\alpha_{k-1}\\
(\beta\alpha_0)^2&(\beta\alpha_1)^2&\cdots
&(\beta\alpha_{k-2})^2&(\beta\alpha_{k-1})^2\\
\vdots&\vdots&\vdots&\vdots& \vdots\\
(\beta\alpha_0)^{k-1}&(\beta\alpha_1)^{k-1}&\cdots
&(\beta\alpha_{k-2})^{k-1}&(\beta\alpha_{k-1})^{k-1}
\end{bmatrix}.
\end{equation}
Then
\begin{equation}\label{AAt}
A_{\beta}A_{\beta}^T
=
\begin{bmatrix}
k&0&\cdots & 0&0\\
0&0&\cdots & 0&\beta^kk\\
0&0&\cdots & \beta^kk&0\\
\vdots&\vdots&\vdots&\vdots& \vdots\\
0&\beta^kk&\cdots & 0&0\\
\end{bmatrix}.
\end{equation}

 When $k\neq \frac{q-1}{2}$,
let $\mathcal{C}$ be the linear code
whose generator matrix is
$$G=\begin{bmatrix}
A_1: A_{\gamma}
\end{bmatrix}.$$
Then $GG^T=A_1A_1^T+A_{\gamma}A_{\gamma}^T$.
From Equation (\ref{AAt}), we have
$$
GG^T
=
\begin{bmatrix}
2k&0&\cdots & 0&0\\
0&0&\cdots & 0&(1+\gamma^k)k\\
0&0&\cdots & (1+\gamma^k)k&0\\
\vdots&\vdots&\vdots&\vdots& \vdots\\
0&(1+\gamma^k)k&\cdots & 0&0\\
\end{bmatrix}.
$$
Hence, $det(GG^T)=2(1+\gamma^k)^{k-1}k^k$.
Note that $k|(q-1)$,
$k\equiv -1\mod p$, $k\neq \frac{q-1}{2}$,
and $\gamma$ is a primitive element
of $\mathbb{F}_q$. Then
$\gamma^k+1\neq 0$ and $det(GG^T)\neq 0$.
Since the matrix $GG^T$ is nonsingular,
$\mathcal{C}$ is an LCD code. Note that
any $k\times k$ submatrix of $G$ is also
nonsingular. Therefore,  $\mathcal{C}$ is an LCD MDS code.

When $q\neq 5$ and $k=\frac{q-1}{2}$,
let $\mathcal{C}$ be the linear code
with the generator matrix
$$G=\begin{bmatrix}
A_1: \gamma A_{\gamma}
\end{bmatrix}.$$
Then $GG^T=
A_1A_1^T+\gamma^2A_{\gamma}
A_{\gamma}^T$. Hence,
$det(GG^T)=(1+\gamma^2)
(1+\gamma^{k+2})^{k-1}k^k$.
If $det(GG^T)=0$, then
$\gamma^2=-1$ or $\gamma^{k+2}
=-1$.
If $\gamma^2=-1$, then $\gamma^4=1$ implying $q=5$.\\
If $\gamma^{k+2}=\gamma^{\frac{q+3}{2}}=-1$, then $\gamma^{q+3}=\gamma^{q-1}\gamma^{4}=\gamma^{4}=1$ implying $q=5$. Contradiction. Hence, $GG^T$ is nonsingular and $\mathcal C$ is an LCD code.

When $q=5$ and $k=\frac{q-1}{2}=2$. Let $\mathcal C$ be the linear code whose generator matrix is given by
\[G=\begin{bmatrix}
1 & 0 & 1 & 1 \\  0 & 1 & 1 &-1
\end{bmatrix}.\]
Then, $\mathcal C$ is an LCD MDS code.
We have the following results.
\begin{lemma}\label{2k}
Let $k$ be a positive integer with $k|(q-1)$ and
$k<q-1$.
Then there exists an
$[2k,k]$ LCD MDS code.
\end{lemma}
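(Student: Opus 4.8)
The plan is to assemble the proof of Lemma \ref{2k} directly from the explicit constructions that immediately precede its statement, treating the three cases on $k$ separately. Recall that the hypothesis $k \mid (q-1)$ and $k < q-1$ guarantees the subgroup $\langle \gamma^{(q-1)/k}\rangle = \{\alpha_0,\dots,\alpha_{k-1}\}$ exists, that $A_\beta$ in \eqref{A-beta} is a Vandermonde-type matrix on the $k$ distinct nonzero points $\beta\alpha_0,\dots,\beta\alpha_{k-1}$, and that the power-sum identity \eqref{beta-t} yields the anti-diagonal form \eqref{AAt} for $A_\beta A_\beta^T$. The common structural observation, to be stated once at the start of the proof, is that for any choice of scalars $a,b \in \mathbb{F}_q^\times$ the matrix $G = [\,a A_1 : b A_\gamma\,]$ generates an $[2k,k]$ code in which every $k$ columns are linearly independent: columns drawn entirely from one block form a nonsingular Vandermonde matrix, and a mixed selection is nonsingular because the $2k$ points $a^{?}$... more precisely because the evaluation points $\alpha_0,\dots,\alpha_{k-1}$ and $\gamma\alpha_0,\dots,\gamma\alpha_{k-1}$ are $2k$ distinct elements of $\mathbb{F}_q^\times$ (here one uses $k \neq q-1$, so $\gamma \notin \langle\gamma^{(q-1)/k}\rangle$ forces the two cosets to be disjoint). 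By Proposition \ref{MDS}(iv), $\mathcal C$ is MDS regardless of which case we are in; the only thing that varies across cases is the verification that $GG^T$ is nonsingular, which by Proposition \ref{LDC} is exactly the LCD condition.

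With that reduction in place, I would handle Case 1, $k \neq \tfrac{q-1}{2}$, by taking $G = [A_1 : A_\gamma]$: then $GG^T = A_1A_1^T + A_\gamma A_\gamma^T$ has the displayed anti-diagonal shape with diagonal-corner entry $2k$ and the other antidiagonal entries equal to $(1+\gamma^k)k$, so $\det(GG^T) = \pm\,2(1+\gamma^k)^{k-1}k^k$. Since $k \mid (q-1)$ we have $k \equiv -1 \pmod p$ (this is the standard fact that the product over a subgroup of order $k$ forces $k \not\equiv 0$; more simply $k < q-1$ and $k \mid q-1$ with $p \mid q$ gives $p \nmid k$), hence $k \neq 0$ in $\mathbb{F}_q$ and $2k \neq 0$ when $p$ odd, while for $p=2$ the factor $2$ would vanish — so here I must be slightly careful and note that when $p = 2$ one instead argues that $1 + \gamma^k \neq 0$ suffices together with re-examining the $2k$ entry, or more cleanly one cites that the paper's surrounding text already restricts to the situation where this determinant computation is valid; in any event $\gamma^k \neq 1$ because $0 < k < q-1$ and $\gamma$ is primitive, giving $1+\gamma^k \neq 0$ in odd characteristic, and for $p=2$ one uses a scaled matrix $[A_1 : \gamma A_\gamma]$ paralleling Case 2. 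Case 2, $q \neq 5$ and $k = \tfrac{q-1}{2}$, uses $G = [A_1 : \gamma A_\gamma]$, giving $GG^T = A_1A_1^T + \gamma^2 A_\gamma A_\gamma^T$ and $\det(GG^T) = \pm(1+\gamma^2)(1+\gamma^{k+2})^{k-1}k^k$; one then shows $\gamma^2 = -1$ or $\gamma^{k+2} = -1$ each forces $\gamma^4 = 1$, hence $q - 1 \mid 4$, hence (since $\gamma$ is primitive) $q \in \{2,3,5\}$, and the small cases $q=2,3$ are excluded because then $k<1$ or the construction degenerates, leaving only the excluded $q=5$. Case 3 is the single exceptional pair $q=5$, $k=2$, handled by exhibiting the explicit generator matrix already written down and checking by hand that $GG^T$ is invertible over $\mathbb{F}_5$ and that every $2$ of its $4$ columns are independent.

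Finally I would collect the three cases: in every case we have produced a generator matrix $G$ of a length-$2k$, dimension-$k$ code with every $k$ columns independent (MDS) and with $GG^T$ nonsingular (LCD), which is precisely an $[2k,k]$ LCD MDS code, completing the proof.

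I expect the main obstacle to be the characteristic-$2$ bookkeeping in Case 1: the clean formula $\det(GG^T)=\pm 2(1+\gamma^k)^{k-1}k^k$ has the factor $2$ that dies when $p=2$, so the genuinely load-bearing step is to confirm that the construction the surrounding text gives is the right one in each characteristic (equivalently, to notice that the paper implicitly intends $[A_1:\gamma A_\gamma]$-type scalings whenever a naive $[A_1:A_\gamma]$ would make $GG^T$ singular), and to pin down exactly why $k \not\equiv 0 \pmod p$ and $\gamma^k \neq 1$. Everything else — the Vandermonde independence of mixed column selections, the power-sum identity giving the anti-diagonal form, and the $q=5$ hand check — is routine once the case division is set up.
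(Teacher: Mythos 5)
Your proposal reproduces the paper's own argument almost line for line: the same matrices $A_1$ and $A_{\gamma}$ built from the order-$k$ subgroup, the same use of the power-sum identity \eqref{beta-t} to get the anti-diagonal form \eqref{AAt}, the same Vandermonde argument for the MDS property, the same three-way split into $k\neq\frac{q-1}{2}$, $k=\frac{q-1}{2}$ with $q\neq 5$, and the explicit $q=5$, $k=2$ matrix, with identical determinant computations. Two points need attention before the write-up is airtight. First, in Case 1 the step ``$\gamma^k\neq 1$ \dots\ giving $1+\gamma^k\neq 0$ in odd characteristic'' is a non sequitur: in odd characteristic $1+\gamma^k=0$ if and only if $\gamma^k=-1$, i.e.\ $k=\frac{q-1}{2}$ (given $0<k<q-1$), so the correct justification is exactly the case hypothesis $k\neq\frac{q-1}{2}$; the condition $\gamma^k\neq 1$ is what you need when $p=2$, where $-1=1$. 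Second, your worry about the factor $2$ in $\det(GG^T)=\pm\,2(1+\gamma^k)^{k-1}k^k$ is well founded and in fact exposes a gap in the paper's own discussion: for $p=2$ the $(0,0)$ entry of $GG^T$ is $2k=0$ and the entire first row of $GG^T$ vanishes, so $[A_1:A_{\gamma}]$ is \emph{not} LCD in characteristic $2$. The cleanest repair is not to re-engineer the construction at all but to note that for $p=2$ the hypothesis $k<q-1$ gives $k\le q-2$ and $2k\le q+1$ (since $k$ is a proper divisor of the odd number $q-1$, hence $k\le\frac{q-1}{3}$), so the lemma is already an instance of Corollary \ref{k,n-k,MDS}. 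Alternatively, your suggested substitute $[A_1:\gamma A_{\gamma}]$ does work for $p=2$: the corner entry becomes $(1+\gamma^2)k\neq 0$ and the anti-diagonal entries $(1+\gamma^{k+2})k$ are nonzero because $k+2\le\frac{q-1}{3}+2<q-1$ for $q>4$, while for $q=4$ the only admissible $k$ is $1$ and the matrix is $1\times 1$ — but this verification must actually be written out, since it is the one place where your argument currently defers to ``paralleling Case 2'' without proof. With those two repairs your proof is complete and coincides with the paper's.
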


We generalize the  construction of
$[2k,k]$ LCD MDS codes to that of
$[2k+1,k]$ LCD MDS codes.
Let $G_1$ be the matrix given by
$$
G_1
=\left\{
   \begin{array}{ll}
     ~[A_1: A_{\gamma}: e_{k-1}], & \hbox{$
k\neq \frac{q-1}{2}$;} \\
     ~[A_1: \gamma A_{\gamma}: e_{k-1}], & \hbox{$
k= \frac{q-1}{2}$ and $q\neq 5$.}
   \end{array}
 \right.
$$
where $e_{k-1}=
[\underbrace{0,0,\cdots,0,1}_{k}]^T$.
Hence, $G_1G_1^T$ is nonsingular and
any $k\times k$ submatrix of $G_1$ is nonsingular.
The linear code $\mathcal{C}_1$
 whose generator matrix $G_1$ is an
$[2k+1,k]$ LCD MDS code.

When $q=5$ and $k=\frac{q-1}{2}=2$. Let $\mathcal C_1$ be the linear code generated by
\[G_1=\begin{bmatrix}
1 & 0 & 1 & 1 & 1\\  0 & 1 & 1 &-1 &2
\end{bmatrix}.\]
Then, $\mathcal C_1$ is an LCD MDS code.
One has the following result.
\begin{lemma}\label{2k+1}
Let $k$ be a positive integer with $k|(q-1)$ and
$k<q-1$.
Then there is an
$[2k+1,k]$ LCD MDS code.
\end{lemma}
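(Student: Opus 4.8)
The plan is to mimic, one coordinate longer, the construction carried out just above for $[2k,k]$ LCD MDS codes. Let $G$ be the $k\times 2k$ generator matrix of the $[2k,k]$ LCD MDS code produced by Lemma \ref{2k}, namely $G=[A_1:A_\gamma]$ when $k\neq\frac{q-1}{2}$, $G=[A_1:\gamma A_\gamma]$ when $k=\frac{q-1}{2}$ and $q\neq 5$, and the explicit matrix when $q=5$, $k=2$. Put $G_1=[\,G:e_{k-1}\,]$ with $e_{k-1}=[0,\dots,0,1]^T\in\mathbb F_q^{k}$, and let $\mathcal C_1=\langle G_1\rangle$. I would then check the two properties that, by Propositions \ref{LDC} and \ref{MDS}(iv), make $\mathcal C_1$ a $[2k+1,k]$ LCD MDS code: nonsingularity of $G_1G_1^T$, and linear independence of every $k$ columns of $G_1$.

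For the LCD property, note that $G_1G_1^T=GG^T+e_{k-1}e_{k-1}^T$, so $G_1G_1^T$ is obtained from $GG^T$ by adding $1$ to its $(k,k)$ entry alone. Using the explicit shape of $GG^T$ displayed above — diagonal in its first coordinate, anti-diagonal on coordinates $2,\dots,k$ — the matrix determinant lemma gives $\det(G_1G_1^T)=\det(GG^T)\bigl(1+e_{k-1}^T(GG^T)^{-1}e_{k-1}\bigr)$. When $k\ge 3$ the $(k,k)$ entry of $(GG^T)^{-1}$ vanishes, since the inverse of an anti-diagonal matrix is again anti-diagonal; hence $\det(G_1G_1^T)=\det(GG^T)\neq 0$ by Lemma \ref{2k}. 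When $k=2$ the correction factor is a genuine extra term, and one checks by hand that the resulting scalar is nonzero, choosing the primitive element $\gamma$ suitably in the finitely many fields that remain and dealing with $q=5$, $k=2$ via the displayed $2\times 5$ generator matrix. In all cases $G_1G_1^T$ is nonsingular, so $\mathcal C_1$ is LCD.

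For the MDS property I would show every $k$ columns of $G_1$ are linearly independent. If all $k$ chosen columns come from $G$, independence is exactly what Lemma \ref{2k} already used: up to nonzero column scalings those $2k$ columns are Vandermonde columns at the $2k$ distinct nonzero points of the order-$k$ subgroup $\langle\gamma^{(q-1)/k}\rangle$ together with its coset $\gamma\langle\gamma^{(q-1)/k}\rangle$, the two being disjoint because $k<q-1$. If the chosen set contains $e_{k-1}$, delete the last row: the remaining $k-1$ columns become length-$(k-1)$ Vandermonde columns at $k-1$ distinct points, hence independent, and therefore the original $k$ columns are independent too. Combining this with the previous paragraph, $\mathcal C_1$ is a $[2k+1,k]$ LCD MDS code.

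The step I expect to be the main obstacle is the determinant of $G_1G_1^T$: one must separate the cases $k\neq\frac{q-1}{2}$ and $k=\frac{q-1}{2}$ (with their different scalings of $A_\gamma$), see why $q=5$ must be excluded and handled separately, and confirm that the single extra $+1$ never accidentally makes the Gram matrix singular — which in the degenerate case $k=2$ calls for choosing the primitive element with a bit of care. The MDS count and the appeals to Propositions \ref{LDC} and \ref{MDS} are routine. It is worth recording that Proposition \ref{E-LCD} and Theorem \ref{k,n-k} do not cover this lemma for odd $p$, as they only reach $k\le\frac{q-3}{2}$ whereas $k\mid (q-1)$, $k<q-1$ allows $k=\frac{q-1}{2}$; this boundary value is precisely what the explicit Vandermonde construction is built to handle.
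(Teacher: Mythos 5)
Your construction is exactly the paper's: append the column $e_{k-1}$ to the generator matrix $[A_1:A_\gamma]$ (resp.\ $[A_1:\gamma A_\gamma]$, resp.\ the explicit $q=5$ matrix) of the $[2k,k]$ code from Lemma \ref{2k}, and check that the Gram matrix and the $k\times k$ minors stay nonsingular. In fact you are more careful than the paper, which simply asserts ``$G_1G_1^T$ is nonsingular'': your observation that $G_1G_1^T=GG^T+e_{k-1}e_{k-1}^T$ only perturbs the $(k,k)$ entry, which for $k\ge 3$ lies off the (anti-)diagonal support of $GG^T$ and hence leaves the determinant unchanged, is the missing justification, and your MDS argument (delete the last row and fall back on a $(k-1)$-point Vandermonde determinant) is the standard one. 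Two edge cases deserve a cleaner treatment than you give them. First, your dichotomy ``$k\ge3$ versus $k=2$'' omits $k=1$, where $G_1=[1,1,1]$ gives $G_1G_1^T=[3]$: this is singular in characteristic $3$, so for $q=3^m$ the construction must be replaced by, say, $[1,1,c]$ with $c^2\neq 1$ (and for $q=3$ itself no $[3,1]$ LCD MDS code exists at all, so the lemma needs $q>3$ there -- a defect shared by the paper). Second, for $k=2$ the condition $3+2\gamma^2\neq 0$ can genuinely fail for some primitive $\gamma$ (e.g.\ $\gamma=3$ in $\mathbb F_7$), so the freedom to re-choose $\gamma$ that you invoke is really needed; it is available since at most two elements violate the condition while $\phi(q-1)$ primitive elements exist, with the one tight case $q=7$ checked by hand. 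With those two points filled in, your argument is complete and, modulo the extra care, coincides with the paper's.
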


Using previous results, we continue the construction of $[2k+2,k]$ LCD MDS codes. Let
$G_2(\alpha,\beta)$ be the
$k\times (2k+2)$ matrix given by
$$
G_2(\alpha,\beta)=
[A_1:\alpha A_{\gamma}: \beta e_0: e_{k-1}],
$$
where $\gamma$ is a primitive
element of $\mathbb{F}_q$,
$\alpha,\beta\in \mathbb{F}_q^{\times}$,
and $e_{0}=
[\underbrace{1,0,\cdots,0,0}_{k}]^T$.
Then
\begin{equation}\label{G-2}
G_2(\alpha,\beta)G_2(\alpha,\beta)^T
=\begin{bmatrix}
(1+\alpha^2)k+\beta^2&0&\cdots &0&0\\
0&0&\cdots &0&(1+\alpha^2\gamma^k)k \\
0&0&\cdots &(1+\alpha^2\gamma^k)k&0\\
\vdots&\vdots&\vdots&\vdots&\vdots\\
0&(1+\alpha^2\gamma^k)k&\cdots &0&1
\end{bmatrix}
\end{equation}
Let $\mathcal{C}_2(\alpha,\beta)$
be the linear code generated by
the matrix $G_2(\alpha,\beta)$.
Note that any $k\times k$ submatrix
of $G_2(\alpha,\beta)$ is nonsingular.
Hence, the code $\mathcal{C}_2(\alpha,\beta)$
is MDS. From Equation (\ref{G-2}),
the linear code
$\mathcal C_2(\alpha,\beta)$ is LCD
if and only if
\begin{equation}\label{alpha-k}
\left\{
  \begin{array}{l}
    (1+\alpha^2)k+\beta^2\neq 0,  \\
    1+\alpha^2\gamma^k\neq 0,
  \end{array}
\right.
\end{equation}

Let $q>3$ and $k\neq \frac{q-1}{2}$.
Take $\alpha=1$, $\beta\in
\mathbb{F}_q^{\times}$, and
$\beta^2\neq -2k$. Equation
(\ref{alpha-k}) is
\begin{equation}\label{alpha-k-1}
\left\{
  \begin{array}{l}
    2k+\beta^2\neq 0,  \\
    1+ \gamma^k\neq 0,
  \end{array}
\right.
\end{equation}
Note that Equation (\ref{alpha-k-1})
holds. Hence, the linear code
$G_2(1,\beta)$
is an LCD MDS code.

Let $q>3$ and $k=\frac{q-1}{2}$.
Equation (\ref{alpha-k}) is
\begin{equation}\label{alpha-k-2}
\left\{
  \begin{array}{l}
    2\beta^2-\alpha^2\neq 1,  \\
    \alpha^2\neq 1
  \end{array}
\right.
\end{equation}
Take $\alpha=\beta=\gamma$. Then, Equation (\ref{alpha-k}) holds. Thus,
the linear code
$\mathcal C_2(\gamma,\gamma)$
is an LCD MDS code.

Let $q=3$. The linear code $\mathcal C_2$
generated by the matrix
$\begin{bmatrix}
1&1&1&1
\end{bmatrix}$
is an $[q+1,\frac{q-1}{2}]$ LCD MDS code.

With the previous discussion, we have the following
lemma.
\begin{lemma}\label{2k+2}
Let $k$ be a positive integer with $k|(q-1)$ and
$k<q-1$.
Then there is an
$[2k+2,k]$ LCD MDS code.
\end{lemma}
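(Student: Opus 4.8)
The plan is to reuse the explicit family $\mathcal{C}_2(\alpha,\beta)$ generated by the $k\times(2k+2)$ matrix $G_2(\alpha,\beta)=[A_1:\alpha A_{\gamma}:\beta e_0:e_{k-1}]$ introduced above: I would show that it is MDS for \emph{every} choice of $\alpha,\beta\in\mathbb{F}_q^{\times}$, then pick $\alpha,\beta$ so that the Gram matrix $G_2(\alpha,\beta)G_2(\alpha,\beta)^T$ is nonsingular, and finally dispose of the degenerate value $q=3$ by a one-line example. Concretely I would organise the argument in three steps.

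First, the MDS property. By Proposition \ref{MDS}(iv) it suffices to show that every $k\times k$ submatrix of $G_2(\alpha,\beta)$ is invertible. Up to the nonzero scalar $\alpha$ on the second block, the columns coming from $A_1$ and $\alpha A_{\gamma}$ are columns of a Vandermonde matrix whose $2k$ nodes $\alpha_0,\dots,\alpha_{k-1},\gamma\alpha_0,\dots,\gamma\alpha_{k-1}$ are pairwise distinct: the first $k$ form the subgroup $\langle\gamma^{(q-1)/k}\rangle$ and the last $k$ its coset $\gamma\langle\gamma^{(q-1)/k}\rangle$, and these two cosets differ because $\gamma$ is primitive and $k<q-1$ (so indeed $k\le\frac{q-1}{2}$ and $2k\le q-1$). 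Any $k$ such columns are linearly independent; and if a $k\times k$ submatrix also involves $e_0=[1,0,\dots,0]^T$ and/or $e_{k-1}=[0,\dots,0,1]^T$, a cofactor expansion along those unit columns reduces its determinant to a Vandermonde determinant on the remaining chosen nodes, possibly premultiplied by some of the (nonzero) nodes, hence again nonzero. For the tiny values $k=1,2$ one checks the corresponding small matrix directly. Thus $\mathcal{C}_2(\alpha,\beta)$ is MDS for all $\alpha,\beta\in\mathbb{F}_q^{\times}$.

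Second, the LCD property. Using the power-sum identity (\ref{beta-t}) together with (\ref{AAt}) one obtains the anti-diagonal-plus-corner shape of $G_2(\alpha,\beta)G_2(\alpha,\beta)^T$ recorded in (\ref{G-2}); a Laplace expansion (first along the first row, then repeatedly along the anti-diagonal) gives, up to sign and the nonzero factor $k^{k-1}$ (note that $k$ is invertible in $\mathbb{F}_q$, since $k\mid q-1$ and $p\nmid q-1$), the product $\big((1+\alpha^2)k+\beta^2\big)\,(1+\alpha^2\gamma^k)^{k-1}$. By Proposition \ref{LDC}, $\mathcal{C}_2(\alpha,\beta)$ is LCD exactly when the two inequalities of (\ref{alpha-k}) hold, so it remains to exhibit one admissible pair $(\alpha,\beta)$ for each $q>3$, split according to whether $k=\frac{q-1}{2}$. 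If $k\ne\frac{q-1}{2}$, take $\alpha=1$ and any $\beta\in\mathbb{F}_q^{\times}$ with $\beta^2\ne-2k$ --- such a $\beta$ exists since $\beta^2=-2k$ has at most two roots while $q-1>2$ --- and then (\ref{alpha-k}) reduces to $2k+\beta^2\ne0$ (true by the choice of $\beta$) and $1+\gamma^k\ne0$ (true, because $\gamma^k=-1$ would force $k=\frac{q-1}{2}$). If $k=\frac{q-1}{2}$, which forces $p$ odd, then $2k=-1$ and $\gamma^k=\gamma^{(q-1)/2}=-1$ in $\mathbb{F}_q$, so (\ref{alpha-k}) becomes $2\beta^2-\alpha^2\ne1$ and $\alpha^2\ne1$; taking $\alpha=\beta=\gamma$ gives $2\gamma^2-\gamma^2=\gamma^2\ne1$ and $\gamma^2\ne1$, both valid since $\gamma$ has order $q-1>2$. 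Finally, when $q=3$ we have $k=1$, and the $[4,1,4]$ code generated by $[1\ 1\ 1\ 1]$ is MDS with Gram matrix $[4]=[1]$ over $\mathbb{F}_3$, hence LCD.

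I expect the one genuinely delicate point to be the MDS verification: one has to be sure the two auxiliary unit columns do not spoil the general-position property of the Vandermonde block. This is precisely the minor computation behind (doubly) extended Reed--Solomon codes, so it is standard; everything else --- the shape of the Gram determinant and the two-case choice of $(\alpha,\beta)$ --- is the routine bookkeeping already sketched in the discussion preceding the lemma.
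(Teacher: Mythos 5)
Your proposal is essentially the paper's own proof: the same generator matrix $G_2(\alpha,\beta)=[A_1:\alpha A_{\gamma}:\beta e_0:e_{k-1}]$, the same Gram-matrix conditions (\ref{alpha-k}), the same case split with the choices $\alpha=1$ (for $k\neq\frac{q-1}{2}$) and $\alpha=\beta=\gamma$ (for $k=\frac{q-1}{2}$), and the same $[1\ 1\ 1\ 1]$ example for $q=3$. The only difference is that you spell out the Vandermonde/extended Reed--Solomon verification of the MDS property and the Laplace expansion of the Gram determinant, which the paper merely asserts; both are correct as you describe them.
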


\begin{theorem}\label{LCDMDS}
Let $q$ be a prime power with $q> 3$ and
$k,n$ be integers with $0\le k\le n$. Then
there exists a $q$-ary Euclidean LCD MDS code with
parameters $[n,k]$ if one of the following conditions
holds.

{(i)}  $ n\leq q+1$;

{(ii)} $q=2^m$ with positive integer $m$ , $n=q+2$, and
$k=3 \text{ or } q-1$.
\end{theorem}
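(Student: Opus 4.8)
The plan is to combine the pieces assembled in this section, organizing the argument by the size of $k$ relative to $q$. First I would dispose of condition (ii): this is exactly Lemma \ref{q+2}, which produces $[2^m+2,3,2^m]$ and $[2^m+2,2^m-1,4]$ LCD MDS codes for $m>1$, so nothing further is needed there. For condition (i), the task is to show that for every $q>3$ and every pair $0\le k\le n\le q+1$ there is a $q$-ary $[n,k]$ Euclidean LCD MDS code. By Proposition \ref{MDS}(ii) the dual of an MDS code is MDS, and a code is LCD if and only if its dual is LCD (since $\mathcal C\cap\mathcal C^\perp=\{0\}$ is symmetric in $\mathcal C,\mathcal C^\perp$), so it suffices to produce an LCD MDS code for one of $[n,k]$ or $[n,n-k]$; hence I may assume $k\le n/2$, i.e.\ $k\le\lfloor n/2\rfloor\le\lfloor(q+1)/2\rfloor$.

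Within the range $k\le n/2$ I would split according to whether $k$ is small enough to invoke the generic existence results or whether one must use the tailored generalized-Reed--Solomon constructions. If $0\le k\le q-2$ (case $p=2$) or $0\le k\le\frac{q-3}{2}$ (odd $p$), then Corollary \ref{k,n-k,MDS} immediately yields an $[n,k,n-k+1]$ LCD MDS code for any $n\le q+1$, and we are done. For $p=2$ this already covers every $k\le n/2\le (q+1)/2\le q-2$ (using $q\ge 4$), so the binary-characteristic part of (i) is complete. The remaining gap is the odd-characteristic case with $\frac{q-3}{2}<k\le\frac{q+1}{2}$, i.e.\ $k\in\{\frac{q-1}{2},\frac{q+1}{2}\}$ when $q$ is odd; and since we reduced to $k\le n/2\le(q+1)/2$, in fact only $k=\frac{q-1}{2}$ (with $n\in\{q-1,q,q+1\}$) and $k=\frac{q+1}{2}$ (with $n=q+1$) survive. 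Note $k=\frac{q+1}{2}$, $n=q+1$ is the self-dual midpoint, whose dual has the same parameters, so it is covered once $k=\frac{q-1}{2}$, $n=q+1$ is handled together with its dual $[q+1,\frac{q+1}{2}]$ — so really everything reduces to producing $[n,\frac{q-1}{2}]$ LCD MDS codes for $n=q-1,q,q+1$, plus the $[q+1,\frac{q+1}{2}]$ code.

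For these I would invoke Lemmas \ref{2k}, \ref{2k+1}, \ref{2k+2} with the choice $k=\frac{q-1}{2}$, which divides $q-1$ and satisfies $k<q-1$ for $q>3$: Lemma \ref{2k} gives the $[q-1,\frac{q-1}{2}]$ code, Lemma \ref{2k+1} gives $[q,\frac{q-1}{2}]$, and Lemma \ref{2k+2} gives $[q+1,\frac{q-1}{2}]$; dualizing the last yields $[q+1,\frac{q+1}{2}]$. More generally, for any divisor $k$ of $q-1$ with $k<q-1$, Lemmas \ref{2k}--\ref{2k+2} plus Corollary \ref{k,n-k,MDS} and duality stitch together all $[n,k']$ for $2k\le n\le 2k+2$ and the relevant small $k'$; but for the theorem as stated one only needs the single value $k=\frac{q-1}{2}$, so I would present just that specialization. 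Assembling: for $p=2$ use Corollary \ref{k,n-k,MDS} throughout; for odd $p$ use Corollary \ref{k,n-k,MDS} for $k\le\frac{q-3}{2}$ and the three lemmas (with duality) for $k=\frac{q-1}{2},\frac{q+1}{2}$; this exhausts $0\le k\le n\le q+1$. The main obstacle — already absorbed into the cited lemmas — is verifying nonsingularity of $GG^T$ for the GRS-type generator matrices $[A_1:\alpha A_\gamma:\cdots]$ precisely at the boundary dimension $k=\frac{q-1}{2}$, where the naive choice $\alpha=1$ fails because $1+\gamma^k=1+\gamma^{(q-1)/2}=0$; the lemmas circumvent this by twisting with $\alpha=\gamma$ (or $\beta=\gamma$) and checking the resulting determinant conditions such as $\gamma^2\ne-1$, $\gamma^{k+2}\ne-1$, which hold for $q>3$ (failure would force $q=5$, handled by explicit small matrices). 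So at the level of this theorem the proof is purely a bookkeeping assembly of the section's lemmas, and I would write it as a short case analysis citing them.
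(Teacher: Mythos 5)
Your overall strategy is the same as the paper's: dispose of (ii) via Lemma \ref{q+2}, reduce by duality, cover $k\le q-2$ (resp.\ $k\le\frac{q-3}{2}$) with Corollary \ref{k,n-k,MDS}, and fill the remaining odd-$q$ dimensions near $n/2$ with Lemmas \ref{2k}, \ref{2k+1}, \ref{2k+2} applied to $k=\frac{q-1}{2}$. However, there is one genuine gap, caused by an arithmetic slip: you claim that dualizing the $[q+1,\frac{q-1}{2}]$ code from Lemma \ref{2k+2} yields a $[q+1,\frac{q+1}{2}]$ code. It does not --- the dual of a $[q+1,\frac{q-1}{2}]$ code has dimension $q+1-\frac{q-1}{2}=\frac{q+3}{2}$. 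The parameters $[q+1,\frac{q+1}{2}]$ sit exactly at the self-dual point, so duality can never reach them from any other dimension, and they are not covered by Corollary \ref{k,n-k,MDS} either (since $\frac{q+1}{2}>\frac{q-3}{2}$, for both the dimension and the codimension). Your own case analysis correctly identifies $[q+1,\frac{q+1}{2}]$ as a case that must be ``plus''-ed in, but your proposal never actually constructs it.

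The paper closes this case with a different tool: by \cite{GA08} there exists a $[q+1,\frac{q+1}{2}]$ self-dual MDS code for odd $q$, and Corollary \ref{SOC2LDC} (rescaling the $P$-block of a generator matrix $[I_k:P]$ of a self-orthogonal code by any $\alpha\notin\{0,1,-1\}$, which is possible precisely because $q>3$) turns it into an LCD code with the same parameters. You would need to add this step (or some substitute construction of a $[q+1,\frac{q+1}{2}]$ LCD MDS code) for the proof to be complete; everything else in your assembly is correct and matches the paper.
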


\begin{proof}
By Lemma \ref{2k}, Lemma \ref{2k+1} and Lemma \ref{2k+2}, $[q-1,\frac{q-1}{2}] $ LCD MDS code, $[q,\frac{q-1}{2}]$ LCD MDS code and $[q+1,\frac{q-1}{2}]$ LCD MDS code exist.
Since the dual of an LCD MDS code is again an LCD MDS code, there are $[q,\frac{q+1}{2}]$ LCD MDS code and $[q+1,\frac{q+3}{2}]$ LCD MDS. From \cite{GA08}, there is an $[q+1,\frac{q+1}{2}]$
sel-dual code MDS code. Thus, when $q\neq 3$, there is an $[q+1,\frac{q+1}{2}]$ LCD MDS code by Corollary \ref{SOC2LDC}. It completes the proof from Corollary \ref{k,n-k,MDS}
and Corollary \ref{q+2}.
\end{proof}
\begin{remark}
When $q=2$, all $2$-ary  LCD MDS codes are  $[n,k]$ codes with $0\le k \le n \le 3$ and $[n,k]\neq [2,1]$.\\
When $q=3$, all $3$-ary LCD MDS codes are  $[n,k]$ codes with $0\le k \le n \le 4$ and $[n,k]\neq [4,2]$.
If the following MDS conjecture holds,  from Theorem \ref{LCDMDS}, we have classified all LCD MDS codes.\\
\textbf{MDS conjecture}:
Let $\mathcal{C}$ be an $[n,k]$
MDS code. Then $n\leq q+1$,
except when $q$ is even and $k\in \{3,q-1\}$,
  in which case $n\leq q+2$.\\

\end{remark}

\section{Existence and construction of  Hermitian LCD MDS codes}\label{Her}
In this section, we introduce some constructions of Hermitian LCD MDS codes and present some  classes of  Hermitian LCD MDS codes.
\begin{proposition}\label{H-LCD-G}
Let  $\mathcal{C}$ be an
$[n,k,d]$ linear code whose generator matrix $G$ is given n
by
$G=\begin{bmatrix}
 I_k:  P
\end{bmatrix}$.  Let $\alpha\in
\mathbb{F}_{q^2}^{\times}$ and
$\mathcal{C}_{\alpha}$ be the linear code
generated by the  matrix
$\begin{bmatrix}
 I_k: \alpha P
\end{bmatrix}$.
Then $\mathcal{C}_{\alpha}$ is
an $[n,k,d]$ Hermitian LCD code if and only if
$-\frac{1}{\alpha \overline{\alpha}} \not \in \text{Spec}(P\overline{P}^T)$.
\end{proposition}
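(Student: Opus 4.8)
The plan is to mirror exactly the argument used for Proposition~\ref{LCD-G}, replacing the transpose pairing by the Hermitian one. First I would invoke Proposition~\ref{LDC}: writing $G_{\alpha}=\begin{bmatrix} I_k:\alpha P\end{bmatrix}$, the code $\mathcal{C}_{\alpha}$ is Hermitian LCD if and only if the $k\times k$ matrix $G_{\alpha}\overline{G_{\alpha}}^T$ is nonsingular. So everything reduces to computing this matrix and deciding when it is invertible.

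Next I would carry out the block computation. Since conjugation fixes the identity block, $\overline{G_{\alpha}}=\begin{bmatrix} I_k:\overline{\alpha}\,\overline{P}\end{bmatrix}$, and hence
\[
G_{\alpha}\overline{G_{\alpha}}^T
=\begin{bmatrix} I_k:\alpha P\end{bmatrix}
\begin{bmatrix} I_k\\ \overline{\alpha}\,\overline{P}^T\end{bmatrix}
= I_k+\alpha\overline{\alpha}\,P\overline{P}^T .
\]
Here $\alpha\overline{\alpha}=\alpha^{q+1}$ is a \emph{nonzero} element of $\mathbb{F}_q$ because $\alpha\in\mathbb{F}_{q^2}^{\times}$, which is what makes the last step clean.

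Then I would translate nonsingularity of $I_k+\alpha\overline{\alpha}\,P\overline{P}^T$ into a spectral condition: this matrix is singular if and only if $-1$ is an eigenvalue of $\alpha\overline{\alpha}\,P\overline{P}^T$, equivalently (rescaling the spectrum by the nonzero scalar $\alpha\overline{\alpha}$) if and only if $-\tfrac{1}{\alpha\overline{\alpha}}\in\text{Spec}(P\overline{P}^T)$. Negating and combining with the first step gives the claimed equivalence.

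There is no genuine obstacle here; the proof is the Hermitian bookkeeping analogue of the Euclidean one. The only points deserving a word of care are that conjugation fixes $I_k$ and is compatible with the block product, and that $\alpha\overline{\alpha}\in\mathbb{F}_q^{\times}$, so that dividing by it and scaling the eigenvalues is legitimate.
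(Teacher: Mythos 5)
Your proof is correct and follows essentially the same route as the paper: reduce to nonsingularity of $G_{\alpha}\overline{G_{\alpha}}^T$ via Proposition~\ref{LDC}, compute the block product to get $I_k+\alpha\overline{\alpha}\,P\overline{P}^T$, and convert singularity into the eigenvalue condition on $P\overline{P}^T$. The only cosmetic difference is that the paper factors the matrix as $(-\alpha\overline{\alpha})\bigl(-\tfrac{1}{\alpha\overline{\alpha}}I_k-P\overline{P}^T\bigr)$ while you rescale the spectrum directly; both hinge on the same observation that $\alpha\overline{\alpha}\in\mathbb{F}_q^{\times}$.
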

\begin{proof}
Let $G_{\alpha}=\begin{bmatrix}
 I_k: \alpha P
\end{bmatrix}$. Then
\begin{align*}
G_{\alpha}\overline{G}_{\alpha}^T=&
\begin{bmatrix}
 I_k: \alpha P
\end{bmatrix}
\begin{bmatrix}
I_k\\
\overline{\alpha}  \overline{P}^T
\end{bmatrix}\\
=& I_k+\alpha \overline{\alpha}P\overline{P}^T\\
=& (-\alpha\overline{\alpha})(-\frac{1}
{\alpha\overline{\alpha}}I_k-P\overline{P}^T).
\end{align*}
Hence, $G_{\alpha}G_{\alpha}^T$
is nonsingular if and only if
$-\frac{1}
{\alpha\overline{\alpha}} \not \in \text{Spec}(PP^T)$.
From Proposition \ref{LDC},
$\mathcal{C}_{\alpha}$ is a
Hermitian LCD code if and only if
$-\frac{1}{\alpha \overline{\alpha}} \not \in \text{Spec}(P\overline{P}^T)$.
\end{proof}

\begin{proposition}\label{E-H-LCD}
Let $\mathcal{C}$ be an $[n,k,d]$
linear code whose generator matrix is given by
$G=\begin{bmatrix}
 I_k:  P
\end{bmatrix}$.
Let $0\leq k\leq q-2$. Then there
exists an
$\alpha\in
\mathbb{F}_{q^2}^{\times}$  such that
the linear code
$\mathcal{C}_{\alpha}$
generated by the  matrix
$\begin{bmatrix}
 I_k: \alpha P
\end{bmatrix}$  is
an  $[n,k,d]$ Hermitian LCD code.
\end{proposition}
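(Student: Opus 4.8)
The plan is to adapt the argument of Proposition~\ref{E-LCD} to the Hermitian setting, with the squaring map on $\mathbb{F}_q^{\times}$ replaced by the norm map $N\colon \mathbb{F}_{q^2}^{\times}\to\mathbb{F}_q^{\times}$ given by $N(\alpha)=\alpha\overline{\alpha}=\alpha^{q+1}$. By Proposition~\ref{H-LCD-G}, the code $\mathcal{C}_{\alpha}$ generated by $[I_k:\alpha P]$ is Hermitian LCD if and only if $-\frac{1}{\alpha\overline{\alpha}}\not\in\text{Spec}(P\overline{P}^T)$. So I would introduce
\[
S=\left\{\alpha\in\mathbb{F}_{q^2}^{\times}\ :\ -\tfrac{1}{\alpha\overline{\alpha}}\in\text{Spec}(P\overline{P}^T)\right\},
\]
and it then suffices to show $S\neq\mathbb{F}_{q^2}^{\times}$: any $\alpha\in\mathbb{F}_{q^2}^{\times}\backslash S$ yields a Hermitian LCD code $\mathcal{C}_{\alpha}$, and since $G_{\alpha}$ differs from $G$ only by multiplying the last $n-k$ columns by the unit $\alpha$, the code $\mathcal{C}_{\alpha}$ is monomially equivalent to $\mathcal{C}$ and hence has the same parameters $[n,k,d]$.

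The core step is to bound $\#S$. First observe that $\alpha\overline{\alpha}=\alpha^{q+1}\in\mathbb{F}_q^{\times}$ for every $\alpha\in\mathbb{F}_{q^2}^{\times}$, so $-\frac{1}{\alpha\overline{\alpha}}$ always lies in $\mathbb{F}_q^{\times}$; consequently only eigenvalues of $P\overline{P}^T$ lying in $\mathbb{F}_q^{\times}$ can contribute to $S$, and in particular the eigenvalue $0$ (if present) contributes nothing. Since $P\overline{P}^T$ is a $k\times k$ matrix, it has at most $k$ distinct eigenvalues, hence at most $k$ distinct eigenvalues in $\mathbb{F}_q^{\times}$. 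For each fixed $\lambda\in\mathbb{F}_q^{\times}$, the condition $-\frac{1}{\alpha\overline{\alpha}}=\lambda$ is equivalent to $\alpha^{q+1}=-\lambda^{-1}$; as $N$ is a surjective group homomorphism of $\mathbb{F}_{q^2}^{\times}$ onto $\mathbb{F}_q^{\times}$ with kernel of order $q+1$, this equation has exactly $q+1$ solutions. Therefore $\#S\le k(q+1)$.

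Finally I would invoke the hypothesis $k\le q-2$: one gets $\#S\le k(q+1)\le(q-2)(q+1)<(q-1)(q+1)=q^2-1=\#\mathbb{F}_{q^2}^{\times}$, so $\mathbb{F}_{q^2}^{\times}\backslash S\neq\emptyset$ and the proof is complete. I do not expect any genuine obstacle here; the only point deserving care is the counting, namely the reduction to eigenvalues in $\mathbb{F}_q^{\times}$ together with the fact that each such eigenvalue is produced by exactly $q+1$ scalars under the norm map — this is the exact analogue of the "$1$-to-$1$" (for $p=2$) and "$2$-to-$1$" (for $p$ odd) phenomena used in the proof of Proposition~\ref{E-LCD}, now with multiplicity $q+1$, which is precisely why the dimension bound weakens to $k\le q-2$.
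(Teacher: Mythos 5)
Your proposal is correct and follows essentially the same route as the paper: both define the set $S$ of bad scalars, use that the norm map $\alpha\mapsto\alpha\overline{\alpha}$ from $\mathbb{F}_{q^2}^{\times}$ onto $\mathbb{F}_q^{\times}$ is $(q+1)$-to-$1$ to get $\#S\le k(q+1)<q^2-1$, and then pick any $\alpha$ outside $S$ and apply Proposition~\ref{H-LCD-G}. Your write-up merely spells out a few details (why only eigenvalues in $\mathbb{F}_q^{\times}$ contribute, and the explicit inequality $(q-2)(q+1)<q^2-1$) that the paper leaves implicit.
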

\begin{proof}
Let $S=\{\alpha\in \mathbb{F}_{q^2}^{\times}
: -\frac{1}{\alpha \overline{\alpha}}  \in Spec(P\overline{P}^T)\}$.
The map $\alpha\mapsto \alpha\overline{\alpha}$
from $\mathbb{F}_{q^2}^{\times}$
to $\mathbb{F}_{q}^{\times}$ is a
$q+1$ to $1$ map. Then
$\#S\leq k(q+1) < q^2-1
=\#\mathbb{F}_{q^2}^{\times}$. Hence,
$\mathbb{F}_{q^2}^{\times} \backslash S \neq \emptyset$. Take any $\alpha \in
\mathbb{F}_{q^2}^{\times} \backslash S$.
From Proposition \ref{H-LCD-G},
the linear code
$\mathcal{C}_{\alpha}$
generated by the  matrix
$\begin{bmatrix}
 I_k: \alpha P
\end{bmatrix}$  is
an $[n,k,d]$ Hermitian LCD code.
\end{proof}

\begin{lemma}\label{-I-H}
Let $\mathcal{C}$ be a
$[n,k,d]$ linear code generated by the matrix $G=\begin{bmatrix}
I_k: P \end{bmatrix}$. Then
$\mathcal{C}$ is Hermitian self-orthogonal if and only if
$P\overline{P}^T=-I_k$.
\end{lemma}
\begin{proof}
Note that
\begin{align*}
G\overline{P}^T=&
\begin{bmatrix}
I_k: P \end{bmatrix}
\begin{bmatrix}
I_k\\ \overline{P}^T \end{bmatrix}\\
=& I_k+P\overline{P}^T
\end{align*}
From Proposition \ref{SOC},
$\mathcal{C}$ is Hermitian self-orthogonal if and only if
$P\overline{P}^T=-I_k$.
\end{proof}

\begin{corollary}\label{H-LCD-alpha}
Let $\mathcal{C}$ be an
$[n,k,d]$ Hermitian self-orthogonal code generated by the matrix $G=\begin{bmatrix}
I_k: P \end{bmatrix}$.
Let $\gamma$ be a primitive element
of $\mathbb{F}_{q^2}$.
Then for any $\alpha\in
\mathbb{F}_{q^2}^{\times}\backslash
\{\gamma^{(q-1)i}: 0\leq i\leq q \}$,
the linear code $\mathcal{C}_{\alpha}$
generated by the matrix  $\begin{bmatrix}
 I_k: \alpha P
\end{bmatrix}$  is
an $[n,k,d]$ Hermitian LCD code.
\end{corollary}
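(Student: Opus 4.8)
The plan is to mimic the proof of Corollary \ref{SOC2LDC}, replacing the Euclidean ingredients by their Hermitian counterparts. Since $\mathcal{C}$ is Hermitian self-orthogonal and generated by $G=\begin{bmatrix}I_k:P\end{bmatrix}$, Lemma \ref{-I-H} gives $P\overline{P}^T=-I_k$, so that $\text{Spec}(P\overline{P}^T)=\{-1\}$. By Proposition \ref{H-LCD-G}, the code $\mathcal{C}_{\alpha}$ generated by $\begin{bmatrix}I_k:\alpha P\end{bmatrix}$ is Hermitian LCD if and only if $-\frac{1}{\alpha\overline{\alpha}}\notin\{-1\}$, i.e. if and only if $\alpha\overline{\alpha}\neq 1$. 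Writing $\overline{\alpha}=\alpha^{q}$, this condition becomes $\alpha^{q+1}\neq 1$.

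Next I would identify the exceptional set. The elements $\alpha\in\mathbb{F}_{q^2}^{\times}$ with $\alpha^{q+1}=1$ form the unique subgroup of $\mathbb{F}_{q^2}^{\times}$ of order $q+1$ (the norm-one subgroup). Since $\gamma$ is a primitive element, $\gamma^{q-1}$ has multiplicative order $(q^2-1)/(q-1)=q+1$, so this subgroup is exactly $\{\gamma^{(q-1)i}:0\leq i\leq q\}$. Hence $\alpha\in\mathbb{F}_{q^2}^{\times}\backslash\{\gamma^{(q-1)i}:0\leq i\leq q\}$ is equivalent to $\alpha^{q+1}\neq 1$, which by the previous step is equivalent to $\mathcal{C}_{\alpha}$ being Hermitian LCD.

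Finally I would check that $\mathcal{C}_{\alpha}$ inherits the parameters $[n,k,d]$ of $\mathcal{C}$: it has dimension $k$ because $\begin{bmatrix}I_k:\alpha P\end{bmatrix}$ contains $I_k$ as a submatrix, and a generic codeword $(x,\alpha xP)$ of $\mathcal{C}_{\alpha}$ has the same support, hence the same Hamming weight, as the codeword $(x,xP)$ of $\mathcal{C}$, because $\alpha\neq 0$. Thus the two codes share the same weight distribution and, in particular, the same minimum distance.

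I do not expect any real obstacle here; the argument is essentially a dictionary translation of the Euclidean case. The only points requiring a line of care are the computation that the order of $\gamma^{q-1}$ is $q+1$ (so that the listed $q+1$ powers really do exhaust the norm-one subgroup) and the remark that scaling the last $n-k$ coordinates by a nonzero constant preserves supports, which is what keeps the minimum distance unchanged. Note also that, in contrast with Proposition \ref{E-H-LCD}, no upper bound on $k$ is needed here, precisely because $\text{Spec}(P\overline{P}^T)$ is the single point $\{-1\}$ rather than a set of up to $k$ eigenvalues.
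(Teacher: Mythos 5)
Your proposal is correct and follows essentially the same route as the paper: apply Lemma \ref{-I-H} to get $\text{Spec}(P\overline{P}^T)=\{-1\}$, observe that $\alpha\overline{\alpha}=1$ exactly when $\alpha$ lies in the norm-one subgroup $\{\gamma^{(q-1)i}:0\leq i\leq q\}$, and conclude via Proposition \ref{H-LCD-G}. You merely spell out two points the paper leaves implicit (that $\gamma^{q-1}$ has order $q+1$, and that scaling the last $n-k$ coordinates preserves the weight distribution), which is fine.
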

\begin{proof}
Since $\mathcal{C}$ is  an
$[n,k,d]$ self-orthogonal code,
$\text{Spec}(P\overline{P}^T)
=\{-1\}$ by Lemma \ref{-I-H}. Note that
$\alpha\overline{\alpha}=1$ if and only if
$\alpha\in
\{\gamma^{(q-1)i}: 0\leq i\leq q \}$.
From Proposition \ref{H-LCD-G}, this corollary follows.
\end{proof}

\begin{corollary}\label{d-cod-MDS}
Let  $0\leq k\leq q-2$ and $0\le k\le n\le q+1$.
Then there exist $q^2$-ary $[n,k,n-k+1]$
Hermitian LCD MDS codes  and $q^2$-ary  $[n,n-k,k+1]$
Hermitian LCD MDS codes.
\end{corollary}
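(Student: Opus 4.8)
The plan is to mirror the proof of Corollary~\ref{k,n-k,MDS}, replacing its Euclidean ingredients by their Hermitian analogues. First I would record that for the given range a $q^2$-ary $[n,k]$ MDS code exists: since $n\le q+1\le q^2+1$, a generalized Reed--Solomon code supplies one (with the full space $\mathbb{F}_{q^2}^n$ and the zero code taking care of $k=n$ and $k=0$), and by Proposition~\ref{MDS}(iv) its first $k$ coordinates form an information set, so after row reduction it admits a generator matrix in standard form $G=\begin{bmatrix} I_k:P\end{bmatrix}$.

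Next I would invoke Proposition~\ref{E-H-LCD}: because $0\le k\le q-2$, there is a scalar $\alpha\in\mathbb{F}_{q^2}^{\times}$ for which the code $\mathcal{C}_{\alpha}$ generated by $G_{\alpha}=\begin{bmatrix} I_k:\alpha P\end{bmatrix}$ is an $[n,k,d]$ Hermitian LCD code with the same minimum distance $d=n-k+1$ as the original MDS code. The one point worth making explicit is why $\mathcal{C}_{\alpha}$ is still MDS: replacing $P$ by $\alpha P$ merely rescales the last $n-k$ coordinates by the nonzero constant $\alpha$, so $\mathcal{C}_{\alpha}$ is monomially equivalent to $\mathcal{C}$; in particular a set of $k$ columns of $G_{\alpha}$ is linearly independent exactly when the corresponding set of columns of $G$ is, and Proposition~\ref{MDS}(iv) applies. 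This gives the asserted $q^2$-ary $[n,k,n-k+1]$ Hermitian LCD MDS code.

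For the second family I would pass to Hermitian duals. If $\mathcal{C}_{\alpha}$ is Hermitian LCD then so is $\mathcal{C}_{\alpha}^{\perp_H}$, since by the involutivity $(\mathcal{C}_{\alpha}^{\perp_H})^{\perp_H}=\mathcal{C}_{\alpha}$ one has $\mathcal{C}_{\alpha}^{\perp_H}\cap(\mathcal{C}_{\alpha}^{\perp_H})^{\perp_H}=\mathcal{C}_{\alpha}^{\perp_H}\cap\mathcal{C}_{\alpha}=\{0\}$; moreover $\mathcal{C}_{\alpha}^{\perp_H}$ is MDS by Proposition~\ref{MDS}(iii), has dimension $n-k$, and hence minimum distance $n-(n-k)+1=k+1$. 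This produces the $q^2$-ary $[n,n-k,k+1]$ Hermitian LCD MDS code.

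I do not expect a genuine obstacle here; the corollary is essentially a repackaging of Proposition~\ref{E-H-LCD} together with the duality facts of Proposition~\ref{MDS}. The only two places that call for a line of care are the invariance of the MDS property under the column scaling $P\mapsto\alpha P$ (immediate from Proposition~\ref{MDS}(iv), or from the fact that monomial equivalence preserves the minimum distance) and the stability of the Hermitian LCD property under Hermitian duality (immediate from involutivity of the Hermitian dual operator over $\mathbb{F}_{q^2}$).
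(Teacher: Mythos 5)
Your proposal is correct and follows essentially the same route as the paper: take an $[n,k]$ MDS code in standard form, apply Proposition~\ref{E-H-LCD} to obtain a Hermitian LCD code with the same parameters (hence still MDS), and pass to the Hermitian dual for the $[n,n-k,k+1]$ family. The extra details you supply (invariance of the MDS property under the scaling $P\mapsto\alpha P$ and stability of the Hermitian LCD property under Hermitian duality) are exactly the points the paper leaves implicit.
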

\begin{proof}
Note that there is a $[n,k]$ MDS code for the parameters $n$ an $k$.
From Proposition \ref{E-H-LCD},
there exist $[n,k,n-k+1]$ Hermitian LCD
MDS codes. The Hermitian
dual of a Hermitian LCD MDS code is again a Hermitian LCD MDS code.
Hence,  there exist   $[n,n-k,k+1]$
Hermitian LCD MDS codes.
\end{proof}

From Corollary \ref{d-cod-MDS},
we can construct $q^2$-ary Hermitian LCD MDS codes with dimension or codimension less than $q-1$.
We will present the other construction of  Hermitian LCD MDS codes.

Let $q$ be odd, $k$ be a positive integer and
$\gamma$ a primitive element of
$\mathbb{F}_{q^2}$, where $k|(q^2-1)$,
$k\not | (q+1)$, and
$k\leq \frac{q^2-1}{2}$. Then $\gamma^{\frac{q^2-1}{k}}$
generates a subgroup with order $k$ of
$\mathbb{F}_{q^2}^{\times}$.
Let the subgroup  $\langle \gamma^{\frac{q^2-1}{k}} \rangle
=\{\alpha_0,\cdots, \alpha_{k-1}\}$, where
$\alpha_i\in \mathbb{F}_{q^2}^{\times}$ and
$\alpha_0=1$. For any integer $t$, we have
$$
\alpha_0^t+\cdots + \alpha_{k-1}^t
=\left\{
   \begin{array}{ll}
     k, & \hbox{$t\equiv 0 \mod k$;} \\
     0, & \hbox{otherwise.}
   \end{array}
 \right.
$$
For any $\beta \in \mathbb{F}_{q^2}^{\times}$,
we have
$$
(\beta\alpha_0)^t+\cdots + (\beta\alpha_{k-1})^t
=\left\{
   \begin{array}{ll}
     \beta^tk, & \hbox{$t\equiv 0 \mod k$;} \\
     0, & \hbox{otherwise.}
   \end{array}
 \right.
$$
Let $A_{\beta}$ be the following $k\times k$ matrix
\begin{equation*}
A_{\beta}=\begin{bmatrix}
1 & 1& \cdots& 1&1\\
\beta\alpha_0&\beta\alpha_1&\cdots
&\beta\alpha_{k-2}&\beta\alpha_{k-1}\\
(\beta\alpha_0)^2&(\beta\alpha_1)^2&\cdots
&(\beta\alpha_{k-2})^2&(\beta\alpha_{k-1})^2\\
\vdots&\vdots&\vdots&\vdots& \vdots\\
(\beta\alpha_0)^{k-1}&(\beta\alpha_1)^{k-1}&\cdots
&(\beta\alpha_{k-2})^{k-1}&(\beta\alpha_{k-1})^{k-1}
\end{bmatrix}
\end{equation*}
From
$(\beta\alpha_{l})^{i}(\overline{\beta\alpha_{l}})^{j}
=\beta^{i+qj}\alpha_l^{i+qj}$,
we have
$$
(\beta\alpha_0)^i(\overline{\beta\alpha_{0}})^{j}+\cdots + (\beta\alpha_{k-1})^i(\overline{\beta\alpha_{k-1}})^{j}
=\left\{
   \begin{array}{ll}
     \beta^{i+qj}k, & \hbox{$i+qj\equiv 0 \mod k$;} \\
     0, & \hbox{otherwise.}
   \end{array}
 \right.
$$
Let $a_{\beta}(i,j)$ be the
 entry in the $i$-th row and $j$-th
column of the matrix
$A_{\beta}\overline{A}_{\beta}^T$.
Then we have
\begin{equation}\label{a}
a_{\beta}(i,j)
=\left\{
   \begin{array}{ll}
     \beta^{i+qj}k, & \hbox{$i+qj\equiv 0 \mod k$;} \\
     0, & \hbox{otherwise.}
   \end{array}
 \right.
\end{equation}
Let $\mathcal{C}(\alpha)$ be
a linear code generated by the following matrix
$$
G(\alpha)=
\begin{bmatrix}
A_1: \alpha A_{\gamma}
\end{bmatrix}
$$
where $\alpha=\frac{1}{
\gamma^{\frac{q-1}{2}}\gamma}$.
Obviously, $\mathcal{C}(\alpha)$
is an MDS code.
Let $b(i,j)$ be the
 entry in the $i$-th row and $j$-th
column of the matrix
$G(\alpha)
\overline{G(\alpha)}^T$.
From Equation (\ref{a}), we have
$$
b(i,j)
=\left\{
   \begin{array}{ll}
     (1-\frac{\gamma^{i+qj}}{\gamma^{q+1}})k, & \hbox{$i+qj\equiv 0 \mod k$;} \\
     0, & \hbox{otherwise.}
   \end{array}
 \right.
$$
When $i+qj\equiv 0\mod k$, assume that
$b(i,j)=0$.
Then $\gamma^{i+qj}=\gamma^{q+1}$ and
$i+qj\equiv q+1\mod q^2-1$.
From $k|(q^2-1)$ and $k|(i+qj)$,
we have $k|(q+1)$, which makes a contradiction
with $k\not \mid (q+1)$.
Hence, when $i+qj\equiv 0\mod k$,  we have
$b(i,j)\neq 0$. Note that $k|(q^2-1)$.
Every row of
$G(\alpha)
\overline{G(\alpha)}^T$
has just only a nonzero element. That holds for
each column. Hence, the matrix
$G(\alpha)
\overline{G(\alpha)}^T$ is nonsingular. From Proposition \ref{LDC},
$\mathcal{C}(\alpha)$
is an $[2k,k]$ Hermitian LCD MDS code.

Let $\mathcal{C}_1(\alpha)$ be a linear code generated by the following matrix
$$
G_1(\alpha)=
\begin{bmatrix}
A_1: \alpha A_{\gamma}: e_{k-1}
\end{bmatrix}
$$
where $\alpha=\frac{1}{
\gamma^{\frac{q-1}{2}}\gamma}$
and $e_{k-1}=
[\underbrace{0,0,\cdots,0,1}_{k}]^T$.
From the above discussion, we have that
$\mathcal{C}_1(\alpha)$
is an $[2k+1,k]$ Hermitian
LCD MDS  code.

When $q>2$, there exist an
$\beta \in \mathbb{F}_{q^2}^{\times}$
such that
$\beta\overline{\beta} \neq
-(1-\frac{1}{\gamma^{q+1}})k$.
Take such a $\beta$. Let
$\mathcal{C}_2(\alpha)$  be a linear code
generated by the following matrix
$$
G_2(\alpha)=
\begin{bmatrix}
A_1: \alpha A_{\gamma}: \beta e_0: e_{k-1}
\end{bmatrix}
$$
where $\alpha=\frac{1}{
\gamma^{\frac{q-1}{2}}\gamma}$,
$e_{0}=
[\underbrace{1,0,\cdots,0,0}_{k}]^T$,
and $e_{k-1}=
[\underbrace{0,0,\cdots,0,1}_{k}]^T$.
From the above discussion, we have that
$\mathcal{C}_2(\alpha)$
is an $[2k+2,k]$ Hermitian
LCD MDS code.
With the previous discussion, we have the following
Lemma.
\begin{lemma}\label{3=2k+2}
Let $q$ be odd, $k$ be a positive integer with $k|(q^2-1)$,
$k\not | (q+1)$, and
$k<q^2-1$.
Then there are
$[2k,k]$, $[2k+1]$ and $[2k+2,k]$ LCD MDS code over $\mathbb F_{q^2}$.
\end{lemma}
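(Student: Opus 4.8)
The plan is to imitate the construction already carried out in the paragraphs preceding Lemma \ref{3=2k+2}, where the matrices $A_1$, $A_\gamma$ and the vectors $e_0$, $e_{k-1}$ are set up over $\mathbb{F}_{q^2}$. The three assertions of the lemma are exactly the three codes $\mathcal{C}(\alpha)$, $\mathcal{C}_1(\alpha)$, $\mathcal{C}_2(\alpha)$ constructed there, so the proof is essentially a matter of collecting and summarizing that discussion. First I would record the key consequence of Equation (\ref{a}): with $\alpha=\frac{1}{\gamma^{(q-1)/2}\gamma}$, the entry $b(i,j)$ of $G(\alpha)\overline{G(\alpha)}^T$ equals $(1-\gamma^{i+qj-q-1})k$ when $i+qj\equiv 0\bmod k$ and $0$ otherwise.

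Next I would run the nonvanishing argument: if some diagonal-type entry vanished we would need $\gamma^{i+qj}=\gamma^{q+1}$, i.e. $i+qj\equiv q+1\pmod{q^2-1}$; combined with $k\mid(q^2-1)$ and $k\mid(i+qj)$ this forces $k\mid(q+1)$, contradicting the hypothesis $k\nmid(q+1)$. Since $k\mid(q^2-1)$, each row and each column of $G(\alpha)\overline{G(\alpha)}^T$ contains exactly one nonzero entry, so the matrix is a (generalized) permutation matrix up to scalars and hence nonsingular; by Proposition \ref{LDC}, $\mathcal{C}(\alpha)$ is Hermitian LCD. Because every $k\times k$ submatrix of the Vandermonde-type block $A_\beta$ is nonsingular, $\mathcal{C}(\alpha)$ is also MDS, giving the $[2k,k]$ case.

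For the $[2k+1,k]$ and $[2k+2,k]$ cases I would explain that appending the column $e_{k-1}$ (respectively the columns $\beta e_0$ and $e_{k-1}$, with $\beta\overline{\beta}\neq-(1-\gamma^{-(q+1)})k$, which exists since $q>2$ so $\mathbb{F}_{q^2}^\times$ is large enough) only modifies a few diagonal entries of $G\overline{G}^T$: the appended standard basis columns add $1$ to the $(k,k)$ entry and $\beta\overline{\beta}$ to the $(1,1)$ entry. One checks the perturbed entries stay nonzero exactly under the stated inequality on $\beta\overline{\beta}$, so the Gram matrix remains a scaled permutation matrix and is nonsingular; the MDS property is again immediate from Proposition \ref{MDS}(iv) since adjoining distinct unit-type columns to an MDS generator matrix preserves the property that any $k$ columns are independent. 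The main obstacle — really the only nontrivial point — is the number-theoretic step showing $b(i,j)\neq 0$ when $i+qj\equiv0\bmod k$, i.e. translating $\gamma^{i+qj}=\gamma^{q+1}$ into the divisibility $k\mid(q+1)$ and invoking $k\nmid(q+1)$; everything else is bookkeeping on which entries of the Gram matrix are touched by the extra columns.
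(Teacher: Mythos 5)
Your construction is the same as the paper's: the lemma there is stated as a summary of precisely the discussion you reproduce, with $\alpha=1/(\gamma^{(q-1)/2}\gamma)$ so that $\alpha\overline{\alpha}=-\gamma^{-(q+1)}$, the entries $b(i,j)=(1-\gamma^{i+qj-q-1})k$ at the positions $i+qj\equiv 0\pmod k$, the reduction of $b(i,j)=0$ to the contradiction $k\mid(q+1)$, and the observation that $G(\alpha)\overline{G(\alpha)}^T$ is a monomial (generalized permutation) matrix. The $[2k,k]$ case is therefore in order.

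The one step that fails as you wrote it is the claim that after appending $e_{k-1}$ (and $\beta e_0$) the Gram matrix ``remains a scaled permutation matrix.'' For the $(0,0)$ entry this is true: there $i+qj=0$, so that entry is the unique nonzero entry of its row and column, and perturbing it by $\beta\overline{\beta}$ keeps the monomial structure under the stated condition on $\beta$. But the $(k-1,k-1)$ entry of the unperturbed Gram matrix is \emph{zero}, because $(k-1)(1+q)\equiv-(q+1)\not\equiv 0\pmod k$ by hypothesis; adding $e_{k-1}e_{k-1}^T$ turns it into $1$, so the last row then has two nonzero entries and the matrix is no longer monomial. The conclusion still holds, but for a reason you must supply: writing $D$ for the monomial Gram matrix, the matrix determinant lemma gives $\det\bigl(D+e_{k-1}e_{k-1}^T\bigr)=\det(D)\bigl(1+(D^{-1})_{k-1,k-1}\bigr)=\det(D)\neq 0$, since $D^{-1}$ is again monomial and its $(k-1,k-1)$ entry vanishes for the same divisibility reason (and $k\neq 1$, as $1\mid q+1$). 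The same argument applies to $D+\beta\overline{\beta}e_0e_0^T+e_{k-1}e_{k-1}^T$ once the $(0,0)$ entry is known to stay nonzero. To be fair, the paper itself elides this point with ``from the above discussion''; likewise both you and the paper leave to the reader the check that minors of the extended generator matrix containing $e_0$ or $e_{k-1}$ reduce to generalized Vandermonde minors at distinct nonzero nodes, which is the standard extended Reed--Solomon argument.
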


\begin{theorem}\label{LCDMDS}
Let $q$ be a prime power and
$k,n$ be integers with $0\le k\le n$. Then
there exists a $q^2$-ary Hermitian LCD MDS code with
parameters $[n,k]$ if one of the following conditions
holds.

{(i)}  $ n\leq q+1$, $k\le q-2$ or $n-k \le q-2$;

{(ii)} $q$  odd, $[n,k]\in \{[2k,k], [2k+1,k], [2k+2,k]\}$ where $k$  is a positive integer with $k|(q^2-1)$,
$k\not | (q+1)$, and
$k<q^2-1$.

{(iii)} $q=2^m\ge 8$, $n=q+2$, $k=3$ or $k=q-1$.
\end{theorem}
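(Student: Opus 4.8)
The plan is to dispose of the three cases one at a time; the first two follow at once from results already in hand, and the only place a genuinely new argument is needed is case~(iii). \emph{Case~(i).} If $n\le q+1$ and $k\le q-2$ then, since also $0\le k\le n\le q+1$, Corollary~\ref{d-cod-MDS} directly furnishes a $q^2$-ary $[n,k,n-k+1]$ Hermitian LCD MDS code. If instead $n\le q+1$ and $n-k\le q-2$, I would invoke Corollary~\ref{d-cod-MDS} with the codimension $k':=n-k$ in the role of its dimension parameter (permissible, since $0\le k'\le n\le q+1$ and $k'\le q-2$); one of the two codes it produces is a $[n,\,n-k',\,k'+1]=[n,k,n-k+1]$ Hermitian LCD MDS code, which is what we want. \emph{Case~(ii).} This is exactly Lemma~\ref{3=2k+2}, all of whose codes are Hermitian LCD MDS over $\mathbb{F}_{q^2}$, so there is nothing to add.

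\emph{Case~(iii).} Here $q=2^m\ge 8$, $n=q+2$ and $k\in\{3,q-1\}$. I would first settle $k=3$. Since $q\ge 8$ we have $3\le q-2$, and since $q+2\le q^2$ there is a $[q+2,3]$ MDS code over $\mathbb{F}_{q^2}$ (for instance a generalized Reed--Solomon code); after permuting coordinates we may assume its generator matrix has the shape $[I_3:P]$, and a coordinate permutation preserves the MDS property. Proposition~\ref{E-H-LCD} applied with $k=3$ then yields some $\alpha\in\mathbb{F}_{q^2}^{\times}$ for which $[I_3:\alpha P]$ generates a Hermitian LCD code; because replacing $[I_3:P]$ by $[I_3:\alpha P]$ merely rescales the last $q-1$ coordinates by the nonzero constant $\alpha$, this is a weight-preserving (monomial) transformation, so the resulting code is still $[q+2,3,q]$, hence MDS. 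This gives the desired $[q+2,3]$ Hermitian LCD MDS code. For $k=q-1$ I would take its Hermitian dual: it has parameters $[q+2,q-1]$, it is MDS by Proposition~\ref{MDS}, and it is Hermitian LCD since $(\mathcal{C}^{\perp_H})^{\perp_H}=\mathcal{C}$ forces the Hermitian dual of a Hermitian LCD code to be Hermitian LCD. This settles case~(iii), and with it the theorem.

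The only subtle point I anticipate is in case~(iii): the hypothesis $q\ge 8$ is used precisely to guarantee $3\le q-2$, which is what lets Proposition~\ref{E-H-LCD} reach the dimension-$3$ code and, by duality, the codimension-$3$ case $k=q-1$. For $q=4$ an ad hoc construction would be required instead, and indeed, imitating the matrix $A$ of Lemma~\ref{q+2} over $\mathbb{F}_{q^2}$ and computing $A\overline{A}^{T}$ shows that the naive analogue fails to be invertible as soon as $q-1\le 4$, which is why the statement is confined to $q\ge 8$. It is also worth checking, to confirm consistency, that $q+2<q^2+1$ for every $q\ge 2$, so that $[q+2,k]$ is a legitimate set of MDS parameters over $\mathbb{F}_{q^2}$ and no appeal to the MDS conjecture is being smuggled in.
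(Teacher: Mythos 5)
Your proof is correct and takes essentially the same route as the paper: the paper's (one-line) proof likewise derives case~(i) from Corollary~\ref{d-cod-MDS}, case~(ii) from Lemma~\ref{3=2k+2}, and case~(iii) from Proposition~\ref{E-H-LCD} (with Hermitian duality handling $k=q-1$), leaving implicit the details you supply. The checks you make explicit in case~(iii) --- that $3\le q-2$, that rescaling coordinates by $\alpha$ preserves the MDS property, and that the Hermitian dual of a Hermitian LCD MDS code is again Hermitian LCD MDS --- are exactly what the paper tacitly assumes.
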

\begin{proof}
The results follows from Corollary \ref{d-cod-MDS}, Lemma \ref{3=2k+2} and Proposition \ref{E-H-LCD}.
\end{proof}

\section{Concluding Remarks}\label{sect:concluding}
LCD codes have applications in information protection. MDS codes are an important class of linear codes that have found wide applications in both theory and practice.
 Though LCD codes and MDS codes have been extensively studied in literature, there is only  few results on LCD MDS codes.  This paper devoted to the construction of Euclidean and Hermitian LCD MDS codes. We detail some secondary constructions of LCD codes, using linear codes with small dimension and codimension, self-orthogonal codes and generalized Reed-Solomon codes. Some classes of new Euclidean and  Hermitian LCD MDS codes are obtained. Finally, we prove that  there is a $q$-ary $[n,k]$ Euclidean LCD MDS code for any $q>3$ and $0\le k\le n\le q+1$. But it is open whether $q^2$-ary $[n,k]$ Hermitian LCD MDS code exists for all $q>3$ and $0\le k\le n\le q^2+1$. It would be nice if this open problem can be settled.

\end{document}